\newtheorem{theorem}{Theorem}
\newtheorem{lemma}[theorem]{Lemma}
\newtheorem{proposition}[theorem]{Proposition}
\newtheorem{example}[theorem]{Example}
\newcounter{myCounter}
\newtheorem{thmNo}[myCounter]{Theorem}
\newenvironment{theoremNo}[1]{\setcounter{myCounter}{0#1}\addtocounter{myCounter}{-1}\begin{thmNo}}{\end{thmNo}}
\newcommand{\until}[3]{#2 \mathrel{\mathbf{U}_{#3}} #1}
\newcommand{\since}[3]{#2 \mathrel{\mathbf{S}_{#3}} #1}
\newcommand{\fDia}{\Diamond}
\newcommand{\pDia}{\rlap{\hspace{0.24em}\raisebox{1pt}{-}}\Diamond}
\newcommand{\fBox}{\Box}
\newcommand{\pBox}{\rlap{\hspace{0.21em}\raisebox{1pt}{-}}\Box}
\newcommand{\true}{\mathbf{true}}
\newcommand{\false}{\mathbf{false}}
\newcommand{\reals}{\mathbb{R}_{\geq 0}}
\newcommand{\MLO}{$\mathit{FO}(<)$}
\newcommand{\ONEMLO}{$\mathit{FO}(<,+1)$}
\newcommand{\QMLO}{$\mathit{FO}(<,+\mathbb{Q})$}
\newcommand{\domain}{\mathbb{R}}
\newcommand{\maybeframe}[1]{\framebox{#1}}
\begin{document}
\title{Expressive Completeness for Metric Temporal Logic}

\author{
\IEEEauthorblockN{Paul Hunter, Jo\"el Ouaknine, and James Worrell}
\IEEEauthorblockA{Department of Computer Science\\University of Oxford\\
United Kingdom OX1 3QD\\
\texttt{\{paul.hunter,joel.ouaknine,james.worrell\}@cs.ox.ac.uk}}
}

\maketitle

\begin{abstract}
Metric Temporal Logic (MTL) is a generalisation of Linear Temporal
Logic in which the Until and Since modalities are annotated with
intervals that express metric constraints. A seminal result of Hirshfeld and
Rabinovich shows that over the reals, first-order logic with binary
order relation $<$ and unary function $+1$ is strictly more expressive
than MTL with integer constants.  Indeed they prove that no
temporal logic whose modalities are definable by formulas of bounded
quantifier depth can be expressively complete for $\mathit{FO}(<,+1)$.
In this paper we show the surprising result that if we allow unary
functions $+q$, $q \in \mathbb{Q}$, in first-order logic and
correspondingly allow rational constants in MTL, then the two logics
have the same expressive power.  This gives the first generalisation
of Kamp's theorem on the expressive completeness of LTL for
$\mathit{FO}(<)$ to the quantitative setting. The proof of this result
involves a generalisation of Gabbay's notion of separation.
\end{abstract}


%
\IEEEpeerreviewmaketitle

\section{Introduction}

One of the best-known and most widely studied logics in specification
and verification is \emph{Linear Temporal Logic (LTL)}: temporal logic
with the modalities \emph{Until} and \emph{Since}.  For discrete-time
systems one considers interpretations of \textit{LTL} over the integers
$(\mathbb{Z},<)$, and for continuous-time systems one considers
interpretations over the reals $(\mathbb{R},<)$.  A celebrated result
of Kamp~\cite{Kamp68} is that, over both $(\mathbb{Z},<)$ and
$(\mathbb{R},<)$, \textit{LTL} has the same expressiveness as the \emph{Monadic
  Logic of Order (\MLO)}: first-order logic with binary order relation
$<$ and uninterpreted monadic predicates.  Thus we can benefit from
the appealing variable-free syntax and elementary decision procedures
of \textit{LTL}, while retaining the expressiveness and canonicity of
first-order logic.

Over the reals {\MLO} cannot express metric properties, such as,
``every request is followed by a response within one time unit''.
This motivates the introduction of \emph{Monadic Logic of Order and
  Metric (\QMLO)}, which augments {\MLO} with a family of unary
function symbols $+q$, $q \in \mathbb{Q}$.  Correspondingly, there
have been a variety of proposals of quantitative temporal logics, with
modalities definable in {\QMLO}\@ (see,
e.g.,~\cite{AH-survey,AH93,AH94,Hen98,HRS98,HR04}).  Sometimes
attention is restricted to {\ONEMLO}---the fragment of {\QMLO} with
only the $+1$ function---and to temporal logics definable in this
fragment.  Typically these temporal logics can be seen as quantitative
extensions of \textit{LTL}\@.  However, until now there has been no fully
satisfactory counterpart to Kamp's theorem in the quantitative
setting.

The best-known quantitative temporal logic is \emph{Metric Temporal
  Logic (MTL)}, introduced over 20 years ago in~\cite{koymans}.  \textit{MTL}
arises by annotating the temporal modalities of \textit{LTL} with intervals
with rational endpoints, representing metric constraints.  Since the
\textit{MTL} operators are definable in {\QMLO}, it is immediate that
one can translate \textit{MTL} into {\QMLO}\@.  The main result of this paper
shows the converse, that \textit{MTL} is expressively complete for {\QMLO}\@.

The generality of allowing rational constants is crucial for
expressive completeness: our translation from {\QMLO} to \textit{MTL}
does not preserve the granularity of timing constraints.  Indeed, it
is known that \textit{MTL} with integer constants is not expressively
complete for {\ONEMLO}.  More generally, a seminal result of Hirshfeld
and Rabinovich~\cite[Theorem]{HR07} asserts that no temporal logic
whose modalities are definable by a (possibly infinite) set of
formulas of {\ONEMLO} of bounded quantifier depth can be
expressively complete for {\ONEMLO}.  Since the modalities of
\textit{MTL} are definable by formulas of quantifier depth two,
necessarily an \textit{MTL} formula equivalent to a given {\ONEMLO}
formula may require rational constants and itself only be definable in
{\QMLO}.

Two of the key ideas underlying our proof of expressive completeness
are \emph{boundedness} and \emph{separation}.  Given $N \in
\mathbb{N}$, an {\QMLO} formula $\varphi(x)$ is $N$-bounded if all
quantifiers are relativised to the interval $(x-N,x+N)$.  Exploiting a
normal form for {\MLO}, due to Gabbay, Pnueli, Shelah and
Stavi~\cite{GPSS80}, we show how to translate bounded {\QMLO} formulas
into \textit{MTL}\@.  Extending this translation to arbitrary {\QMLO}
formulas requires an appropriate metric analog of Gabbay's notion of
\emph{separation}~\cite{G81}.

Gabbay~\cite{G81} shows that every \textit{LTL} formula can be equivalently
rewritten as a Boolean combination of formulas, each of which depends
only on the past, present or future.  This seemingly innocuous
separation property has several far-reaching consequences (see the
survey of Hodkinson and Reynolds~\cite{HodkinsonR05}).  In particular,
the fact \textit{LTL} has the property is a key lemma in an inductive
translation from {\MLO} to \textit{LTL}\@.  We prove an analogous result for
\textit{MTL}: every \textit{MTL} formula can be equivalently rewritten as a Boolean
combination of formulas, each of which is either bounded (i.e., refers
to the \emph{near present}) or refers to the \emph{distant future} or
\emph{distant past}.  Crucially, while the distant past and distant
future are disjoint, they are both allowed to overlap with near
present, unlike in Gabbay's result.  We exploit our result in like
manner to Gabbay to give an inductive translation of {\QMLO} to \textit{MTL}\@.
Here it is vital that we already have a translation of bounded {\QMLO}
formulas to \textit{MTL}.

\subsection*{Related Work}

A more elaborate quantitative extension of \textit{LTL} is \emph{Timed
  Propositional Temporal Logic (TPTL)}, which expresses timing
constraints using variables and
{freeze quantification}~\cite{AH94}.  From the respective
definitions of the logics the following inclusions in
expressiveness are straightforward:
\begin{gather} \mathit{MTL} \subseteq \mathit{TPTL} 
\subseteq \mbox{{\QMLO}} \, . 
\label{eq:chain}
\end{gather}
Bouyer, Chevalier and Markey~\cite{BCM05} showed that the inclusion
between \textit{MTL} and \textit{TPTL} is strict if only future temporal connectives are
considered, confirming a conjecture of~\cite{AH94}.  However they left
open the case in which both past and future connectives are allowed.
Our main result shows that in this case the chain of inclusions
(\ref{eq:chain}) collapses, resolving this open question.

In fact, \textit{TPTL} has already been shown to be expressively complete for
{\QMLO} in~\cite{HV07}.  Notwithstanding this
result, we regard the result in the present paper as the first fully
satisfactory analog of Kamp's Theorem for {\QMLO}\@.  This is because
\textit{TPTL} is a hybrid between first-order logic and temporal logic,
featuring variables and quantification in addition to temporal
modalities~\cite{H90}.

The logics considered in this paper are all undecidable.  Adding $+1$
to {\MLO} or $\fDia_{=1}\varphi$ ($\varphi$ will be true in exactly
one time unit) to \textit{LTL} already leads to an undecidable
satisfiability problem over the reals.  Intuitively, the source of
undecidability is the ability to express \emph{punctual} metric
constraints, such as ``every request is followed by a response in
exactly one time unit''.  The expressiveness of decidable quantitative
temporal logics has also been investigated in~\cite{HR05,HR07}.  The
main results present a hierarchy of decidable temporal logics with
\emph{counting modalities}, and characterise their expressiveness in
terms of fragments of $\mathit{FO}(<,+1)$.  The most basic such logic
arises by adding the modality $\fDia_{<1} \varphi$ to \textit{LTL},
expressing that $\varphi$ will be true within one time unit in the
future, and is equivalent in expressiveness to the logic
MITL~\cite{AFH96}.
 

Yet another approach to expressive completeness is taken in our
previous work~\cite{ORW09}.  This paper considers the fragment of
{\QMLO} with only the $+1$ function.  Likewise it restricts to \textit{MTL}
formulas in which intervals have integer endpoints.  Recall that in
this setting expressive completeness fails over unbounded domains such
as $(\mathbb{R},<)$ and $(\mathbb{R}_{\geq 0},<)$.
However~\cite{ORW09} shows that expressive completeness holds over
each bounded time domain $([0,N),<)$.  While some of the ideas
  from~\cite{ORW09} are used in the present paper, our results differ
  substantially.  While~\cite{ORW09} relies on a bounded time domain,
  the present paper considers syntactically bounded formulas on an
  unbounded domain.  Even the fact that \textit{MTL} is expressively complete
  for syntactically bounded {\ONEMLO} formulas crucially uses the fact
  that we allow fractional constants.

\section{Definitions and Main Results}\label{sec:prelim}

\subsection{First-order logic}
Formulas of \emph{Monadic Logic of Order and Metric ({\QMLO})} are
first-order formulas over a signature with a binary relation symbol
$<$, an infinite collection of unary predicate symbols
$P_1,P_2,\ldots$, and an infinite family of unary function symbols
$+q$, $q \in \mathbb{Q}$.  Formally, the terms of {\QMLO} are
generated by the grammar $t::=x\mid t+q$, where $x$ is a variable and
$q \in \mathbb{Q}$.  Formulas of {\QMLO} are given by the
following syntax:
\[ \varphi ::= \true \mid P_i(t) \mid t < t \mid \varphi \wedge
\varphi \mid \neg \varphi \mid \exists x \, \varphi \, , \] where $x$
denotes a variable and $t$ a term.  

We consider interpretations of {\QMLO} over the real line\footnote{
  Our results carry over to subintervals of $\mathbb{R}$, such as the
  non-negative reals $\reals$.}, $\mathbb{R}$, with the natural
interpretations of $<$ and $+q$.  It follows that a structure for
{\QMLO} is determined by an interpretation of the monadic predicates.

Of particular importance is {\ONEMLO}, the fragment of {\QMLO} that
omits all the $+q$ functions except $+1$.  For simplicity, when
considering formulas of {\ONEMLO} we will often use standard
arithmetical notation as a shorthand, for example,
\[ x-y>2 \quad \equiv \quad (y+1)+1 < x \, . \]

\subsection{Metric Temporal Logic}
Given a set $\boldsymbol{P}$ of atomic propositions, the formulas of
\emph{Metric Temporal Logic (MTL)} are built from $\boldsymbol{P}$
using Boolean connectives and time-constrained versions of the
\emph{Until} and \emph{Since} operators $\until{}{}{}$ and
$\since{}{}{}$ as follows:
\[ \varphi ::= \true\mid P \mid \varphi \wedge \varphi \mid \neg \varphi \mid
\until{\varphi}{\varphi}{I} \mid \since{\varphi}{\varphi}{I} \, , \]
where $P \in \boldsymbol{P}$ and $I\subseteq (0,\infty)$ is an
interval with endpoints in $\mathbb{Q}_{\geq 0}\cup\{\infty\}$.

Intuitively, the meaning of $\until{\varphi_2}{\varphi_1}{I}$ is that
$\varphi_2$ will hold at some time in the interval $I$, and until then
$\varphi_1$ holds. 
More precisely, the semantics of \textit{MTL} are defined as follows.  A
\emph{signal} is a function $f:\domain \to 2^{\boldsymbol{P}}$.  Given a signal $f$
and $r \in \domain$, we define the satisfaction relation $f,r \models
\varphi$ by induction over $\varphi$ as follows:
\begin{itemize}
\item $f,r \models p$ iff $p \in f(r)$,
\item $f,r \models \neg \varphi$ iff $f,r \not\models \varphi$,
\item $f,r \models \varphi_1 \wedge \varphi_2$ iff $f,r \models \varphi_1$ and $f,r \models \varphi_2$,
\item $f,r \models \until{\varphi_2}{\varphi_1}{I}$ iff there exists
  $t>r$ such that $t-r \in I$, $f,t \models \varphi_2$ and $f,u \models
  \varphi_1$ for all $u$, $r < u < t$,
\item $f,r \models \since{\varphi_2}{\varphi_1}{I}$ iff there exists
  $t<r$ such that $r-t \in I$, $f,t \models \varphi_2$ and $f,u
  \models \varphi_1$ for all $u$, $t < u < r$.
\end{itemize}

\textit{LTL} can be seen as a restriction of \textit{MTL} with only the interval
$I=(0,\infty)$.  Indeed, if $I=(0,\infty)$ then we omit the annotation
$I$ in the corresponding temporal operator since the constraint is
vacuous.  We also use arithmetic expressions to denote
intervals.  For example, we write $\until{}{}{<3}$ for
$\until{}{}{(0,3)}$ and $\until{}{}{=1}$ for $\until{}{}{\{1\}}$.  We
say the $\until{}{}{I}$ and $\since{}{}{I}$ operators are
\emph{bounded} if $I$ is bounded, otherwise we say that the operators
are \emph{unbounded}.

We introduce the derived connectives $\fDia_I \varphi:=
\until{\varphi}{\true}{I}$ ($\varphi$ will be true at some point in
interval $I$) and $\pDia_I \varphi :=\since{\varphi}{\true}{I}$
($\varphi$ was true at some point in interval $I$ in the past).  We
also have the dual connectives $\fBox_I \varphi := \neg \fDia_I \neg
\varphi$ ($\varphi$ will hold at all times in interval $I$ in the
future) and $\pBox_I := \neg \pDia_I \neg \varphi$ ($\varphi$ was true
at all times in interval $I$ in the past).

\subsection{Expressive Equivalence}
\label{sec:exp-equiv}
Given a set $\boldsymbol{P}=\{P_1, \ldots, P_m\}$ of monadic
predicates, a signal $f:\domain \to 2^{\boldsymbol{P}}$ defines an
interpretation of each $P_i$, where $P_i(r)$ if and only if $P_i \in
f(r)$.  As observed earlier, this is sufficient to define the
model-theoretic semantics of {\QMLO}, enabling us to relate the
semantics of {\QMLO} and \textit{MTL}.

Let $\varphi(x)$ be an {\QMLO} formula with one free variable and $\psi$
an \textit{MTL} formula.  We say $\varphi$ and $\psi$ are \emph{equivalent} if
for all signals $f$ and $r \in \domain$:
\[ f \models \varphi[r] \Longleftrightarrow f,r \models \psi.\]

\begin{example}
\label{ex:fomtl}
Consider the following formula, which says that $P$ will be true at
two points within the next time unit:
\[ 
\varphi(x) := \exists y\, \exists z\, ((x<y<z<x+1) \wedge P(y) \wedge
P(z)) \, . \] It was shown in~\cite{HR07} that $\varphi$ cannot be
expressed in \textit{MTL} using only integer constants\footnote{In
  fact~\cite{HR07} did not consider so-called \emph{punctual}
  operators, i.e., singleton constraining intervals.  But their
  argument goes through \emph{mutatis mutandis}.}.  To see this,
consider the signal $f$ in which the predicate $P$ is true exactly at
the points $\frac{2n}{3}$, $n \in \mathbb{N}$.  It can be shown by
induction that for every \textit{MTL} formula $\varphi$ with integer constants
there exists $t_0>0$ and a predicate $\theta$ that is either
$\mathbf{true}$, $\mathbf{false}$, $P$, $\neg P$, or $\fDia_{=1}P$,
such that for all $t>t_0$, $f,r \models \varphi$ iff $f,r\models
\theta$.  On the other hand, for $2n\equiv 1 \;(\mathrm{mod}\; 3)$, $\varphi$ is
continuously true on the interval $(\frac{2n-1}{3},\frac{2n}{3})$ and
false on the boundary of the interval.

As observed in~\cite{BCM05}, we can, however, express $\varphi(x)$ in \textit{MTL}
by using fractional constants.  The idea is to consider three cases
according to whether $P$ is true twice in the interval
$(x,x+\frac{1}{2}]$, twice in the interval $[x+\frac{1}{2},x+1)$, or 
once each in $(x,x+\frac{1}{2})$ and $(x+\frac{1}{2},x+1)$.  We are thus led to define
the \textit{MTL} formula
\begin{align*}
\varphi^\dag \, := \, & \fDia_{(0,\frac{1}{2})}(P \wedge \fDia_{(0,\frac{1}{2})} P) \, \vee\\ 
				& \fDia_{=1}(\pDia_{(0,\frac{1}{2})}(P \wedge \pDia_{(0,\frac{1}{2})}P)) \,\vee\\
                &(\fDia_{(0,\frac{1}{2})}P \wedge \fDia_{(\frac{1}{2},1)}P)\,,
\end{align*}
which is equivalent to $\varphi$.
\end{example}

The following is straightforward.
\begin{proposition}\label{prop:easy}
For every \textit{MTL} formula $\varphi$ there is an equivalent {\QMLO} formula
$\varphi^\ast(x)$.
\end{proposition}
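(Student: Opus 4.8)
The plan is to proceed by structural induction on the \textit{MTL} formula $\varphi$, defining the translation $\varphi \mapsto \varphi^\ast(x)$ so that $\varphi^\ast$ has the single free variable $x$ and satisfies $f \models \varphi^\ast[r] \Longleftrightarrow f,r \models \varphi$ for every signal $f$ and every $r \in \domain$. The base cases and the Boolean cases are immediate, and simply read off the corresponding clauses of the \textit{MTL} semantics: I would set $\true^\ast := \true$, $P^\ast(x) := P(x)$, $(\neg\varphi)^\ast := \neg\varphi^\ast$, and $(\varphi_1 \wedge \varphi_2)^\ast := \varphi_1^\ast \wedge \varphi_2^\ast$.

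The only case requiring any work is the temporal operators, and here the crux is expressing the metric side-condition in {\QMLO}. First I would observe that, because the signature contains a function symbol $+q$ for every $q \in \mathbb{Q}$, and because $I$ has endpoints in $\mathbb{Q}_{\geq 0}\cup\{\infty\}$, membership is definable by a formula $\delta_I(x,y)$ asserting $y - x \in I$: a lower endpoint $a$ of $I$ contributes the conjunct $x+a < y$ if $a$ is excluded and $(x+a < y) \vee (x+a = y)$ if $a$ is included (where $s=t$ abbreviates $\neg(s<t)\wedge\neg(t<s)$), an upper endpoint $b<\infty$ contributes symmetrically $y<x+b$ or its non-strict variant, and an upper endpoint $\infty$ contributes no conjunct. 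With $\delta_I$ in hand I would then set
\[
(\until{\varphi_2}{\varphi_1}{I})^\ast(x) := \exists y\,\big(\delta_I(x,y) \wedge \varphi_2^\ast(y) \wedge \forall z\,((x<z \wedge z<y) \to \varphi_1^\ast(z))\big),
\]
and dually
\[
(\since{\varphi_2}{\varphi_1}{I})^\ast(x) := \exists y\,\big(\delta_I(y,x) \wedge \varphi_2^\ast(y) \wedge \forall z\,((y<z \wedge z<x) \to \varphi_1^\ast(z))\big),
\]
renaming the bound variables of the inductively obtained $\varphi_1^\ast,\varphi_2^\ast$ as needed so that $x$ remains the unique free variable.

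Finally I would confirm correctness by unwinding the semantic clauses: since $I \subseteq (0,\infty)$, any witness $y$ satisfying $\delta_I(x,y)$ automatically has $y>x$, so the existential matches the requirement ``there exists $t>r$'' in the \textit{MTL} semantics, while the relativised universal captures ``$\varphi_1$ holds throughout the open interval between the two points''; the induction hypothesis then transfers satisfaction of $\varphi_1^\ast,\varphi_2^\ast$ back to the subformulas. I do not expect any genuine mathematical obstacle here — the statement is flagged as straightforward for good reason. If there is any point of friction it is purely bookkeeping: correctly encoding all four open/closed endpoint combinations (and the unbounded case) inside $\delta_I$, and tracking the free variable through the induction. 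What is worth emphasising is that it is precisely the availability of the functions $+q$ that makes each $\delta_I$ expressible, which is exactly why the target logic is {\QMLO} rather than {\MLO}.
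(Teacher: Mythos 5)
Your proof is correct and is exactly the canonical translation the authors have in mind: the paper simply declares Proposition~\ref{prop:easy} straightforward and omits the argument, and your structural induction with the standard first-order rendering of $\until{}{}{I}$ and $\since{}{}{I}$ (using the $+q$ functions to define $\delta_I$) fills in precisely what was left implicit. No gaps; the endpoint bookkeeping and the observation that $I\subseteq(0,\infty)$ makes the witness automatically satisfy $y>x$ are handled correctly.
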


Our main result is the converse:
\begin{theorem}
For every {\QMLO} formula $\varphi(x)$ there is an equivalent {MTL} formula
$\varphi^\dag$.
\label{thm:main}
\end{theorem}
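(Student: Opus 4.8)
The plan is to prove Theorem~\ref{thm:main} by structural induction on the {\QMLO} formula $\varphi(x)$, translating each subformula with a single free variable into an equivalent \textit{MTL} formula. The Boolean cases are immediate, and so are the one-variable atoms: an atom $P_i(x+q)$ becomes $\fDia_{=q}P_i$ when $q>0$, $\pDia_{=|q|}P_i$ when $q<0$, and $P_i$ when $q=0$, while an order atom between two terms in the same variable is a constant. All the difficulty is concentrated in the existential step $\exists y\,\psi(x,y)$, where $\psi$ genuinely couples two points. To discharge it I would assemble two independent tools --- a direct translation of \emph{bounded} {\QMLO} formulas into \textit{MTL}, and a metric \emph{separation} theorem for \textit{MTL} --- and combine them in the manner of Gabbay's inductive proof of Kamp's theorem.

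First I would establish the base tool: every $N$-bounded {\QMLO} formula, in which all quantifiers are relativised to $(x-N,x+N)$, is equivalent to an \textit{MTL} formula. Such a formula consults the signal only on a window of width $2N$ about $x$, and the finitely many rational constants in its terms are integer multiples of some $1/D$. The order-theoretic skeleton is captured by the normal form of \cite{GPSS80} for \MLO, which expresses it through nested Until and Since; onto this I would graft the metric data by subdividing the window at the multiples of $1/D$ and replacing each order modality by an \textit{MTL} modality whose constraining interval carries the appropriate rational endpoints, with the punctual modalities $\fDia_{=q}$ and $\pDia_{=q}$ accessing the exact offsets. This is exactly where fractional constants are indispensable: as the Example shows, even the bounded property ``$P$ holds twice within one time unit'' is inexpressible with integer constants, the remedy being to subdivide the unit interval at a rational point. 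I would adapt the bounded-domain techniques of \cite{ORW09} here, the new ingredient being that rational subdivision removes the granularity obstruction that defeats integer constants.

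Next I would prove the metric analog of the separation theorem of \cite{G81}: every \textit{MTL} formula is equivalent to a Boolean combination of formulas, each of which is either bounded, or depends only on the distant future, or depends only on the distant past. With this in hand the quantifier step splits $\exists y\,\psi(x,y)$ according to the position of $y$ relative to $x$. The case $|y-x|\le N$ is itself a bounded formula and is discharged by the base tool. For $y$ in the distant future I would apply the separation theorem to $\psi$ regarded as a formula evaluated at $y$, factoring it into a part that looks only at the neighbourhood and future of $y$ and a part that looks back from $y$ toward $x$; the former is placed under an unbounded future modality $\fDia_{>N}$ anchored at $x$, while the latter, together with the decoupled dependence on the neighbourhood of $x$, is reduced through the bounded translation and the Until structure. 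The distant-past case is symmetric, and the resulting one-variable pieces are translatable by the induction hypothesis, closing the induction.

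The principal obstacle is the separation theorem itself, and in particular the asymmetry flagged in the introduction: unlike Gabbay's partition into disjoint past, present and future, here the distant past and distant future are disjoint from each other but each is permitted to overlap the near present. This overlap is forced by the metric setting, since a single unbounded modality reaching into the far future inevitably also records information about an intermediate bounded region. Consequently the usual separation induction, which pushes temporal operators outward past one another, must be reworked while carefully tracking which bounded ``collar'' each distant formula is allowed to see, and verifying that every such collar stays within the scope of the base-case translation. Making this bookkeeping precise is the technical heart of the argument.
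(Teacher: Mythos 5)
Your overall architecture matches the paper's: the two tools you isolate --- expressive completeness of \textit{MTL} for $N$-bounded formulas (the paper's Theorem~\ref{thm:bounded}, proved via unit formulas, the decomposition normal form of~\cite{GPSS80}, and rational subdivision of the window) and a metric separation theorem whose distant parts may overlap the bounded present (the paper's Theorem~\ref{thm:sepThm}) --- are exactly the two pillars of the actual proof, and you correctly identify why fractional constants and the overlap are unavoidable. The paper also inserts a preliminary scaling step reducing {\QMLO} to {\ONEMLO}, which you omit but which is routine.

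However, the step where you combine the tools to handle $\exists y\,\psi(x,y)$ has a genuine gap. First, the claim that the subcase $|y-x|\le N$ ``is itself a bounded formula'' is false as stated: relativising only the outermost quantifier does nothing to the inner quantifiers of $\psi$, which may still range over the whole line, so the base tool does not apply to it directly. Second, you propose to ``apply the separation theorem to $\psi$ regarded as a formula evaluated at $y$'', but $\psi(x,y)$ is a two-variable first-order formula, whereas the separation theorem is a statement about \textit{MTL} formulas; it cannot be invoked until an \textit{MTL} formula is already in hand. The missing idea is the decoupling device of the paper's Lemma~\ref{lem:main}: case-split on the truth values of the atoms $P_i(x)$, and replace every remaining atom relating $x$ to a bound variable ($x=z$, $x<z$, $x>z$, $x+1=z$, $x=z+1$) by a fresh monadic proposition from $\{P_=,P_<,P_>,P_+,P_-\}$. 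This turns the body into a genuinely one-variable formula of lower quantifier depth, to which the induction hypothesis applies, yielding an \textit{MTL} formula $\psi^\dag$; only then is separation applied, to $\pDia\psi^\dag\vee\psi^\dag\vee\fDia\psi^\dag$. In the resulting distant-future and distant-past components the fresh propositions have known truth values and disappear; in the bounded components one converts back to an $N$-bounded first-order formula, un-substitutes the fresh propositions, and invokes the bounded translation. Your sketch inverts this order of operations and omits the substitution trick, without which neither the induction hypothesis nor the separation theorem can be brought to bear on the two-variable body $\psi(x,y)$.
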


As we now explain, by a simple scaling argument it suffices to prove
Theorem~\ref{thm:main} in the special case for which $\varphi$ is an
{\ONEMLO} formula.  Let $f$ be a signal and $r \in
\mathbb{Q}_{>0}$.  We define the signal $r\cdot f$ by $r \cdot f(s) := f(\frac{s}{r})$.
Given either an {\QMLO} formula $\varphi(x)$ or an \textit{MTL} formula $\varphi$,
we say that the formula $\varphi^r$ is a \emph{scale} of $\varphi$ by
$r\in \mathbb{Q}_{>0}$, if for all signals $f$ and all $s \in \domain$,
\[ f,s \models \varphi \quad \Longleftrightarrow 
\quad r \cdot f, rs \models \varphi^r \, . \] 
It is straightforward that
{\QMLO} and \textit{MTL} are both closed under scaling: in each case the
required formula $\varphi^r$ is obtained by multiplying all constants
occurring in $\varphi$ by $r$.

Now we show how to deduce expressive completeness of \textit{MTL} for {\QMLO} from
the fact that \textit{MTL} is at least as expressive as the fragment
{\ONEMLO}.  Given an {\QMLO} formula $\varphi(x)$, pick $r$ such that
$\varphi^r$ is an {\ONEMLO} formula and translate $\varphi^r$ to an
equivalent \textit{MTL} formula $\psi$.  Then rescaling $\psi$ by $1/r$, we
obtain an \textit{MTL} formula $\psi^{1/r}$ that is equivalent to the original
formula $\varphi$.

We will see later that the translation from {\ONEMLO} to \textit{MTL} already
involves temporal operators whose constraining intervals have
fractional endpoints, as suggested by Example~\ref{ex:fomtl}.

\section{Syntactic Separation of MTL}\label{sec:sep}

In~\cite{GHR94}, Gabbay \textit{et al.} showed that \textit{LTL} formulas over
Dedekind-complete domains are equivalent to Boolean combinations of
formulas that depend exclusively on one of the past, present, or
future.  We state this result as it applies to continuous domains (the
formulation in the discrete setting is slightly more straightforward).
To state the result we recall the \emph{right-limit} modality $K^+$ and 
\emph{left-limit} modality $K^-$,
respectively defined as:
\[ K^+ \varphi  := \neg (\until{\true}{\neg \varphi}{}) \qquad K^- \varphi  := \neg (\since{\true}{\neg \varphi}{})\, .\]
The formula $K^+ \varphi$ states that $\varphi$ is true 
arbitrarily close in the future and $K^- \varphi$ asserts that $\varphi$ is
true arbitrarily close in the past.

\begin{theorem}[\cite{GHR94}]\label{thm:gabbaySep}
Over Dedekind-complete domains, every {LTL} formula is equivalent to a
Boolean combination of:
\begin{itemize}
\item atomic formulas,
\item formulas of the form $\until{\varphi_2}{\varphi_1}{}$ such that
  $\varphi_1$ and $\varphi_2$ use only $\until{}{}{}$ and $K^-$, 
\item formulas of the form $\since{\varphi_2}{\varphi_1}{}$ such that
$\varphi_1$ and $\varphi_2$ use only $\since{}{}{}$ and $K^+$.
\end{itemize}
\end{theorem}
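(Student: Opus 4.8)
The plan is to prove separation by repeatedly eliminating the innermost occurrence of a past operator nested inside a future operator, or conversely, working outwards from the leaves of the syntax tree. First I would fix terminology: call a formula \emph{pure future} if it is built from atoms using Boolean connectives, $\until{}{}{}$ and $K^-$; \emph{pure past} if built from atoms using $\since{}{}{}$ and $K^+$; and \emph{separated} if it is a Boolean combination of atomic, pure-future and pure-past formulas. Arguing by induction on formula structure, it then suffices to show that whenever $\varphi_1$ and $\varphi_2$ are separated, both $\until{\varphi_2}{\varphi_1}{}$ and $\since{\varphi_2}{\varphi_1}{}$ can be rewritten in separated form; putting the arguments into disjunctive normal form and applying the standard distribution laws for Until and Since reduces the task to a small number of irreducible \emph{mixed} patterns, in which a future operator is applied to a pure-past subformula or a past operator to a pure-future one.

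The technical heart is a finite list of \emph{separation identities} eliminating a single unit of mixed nesting, the representative case being an Until applied to a Since, $\until{\since{D}{C}{}}{A}{}$, where $A$, $C$, $D$ are separated and the opposite polarity has already been dealt with. I would establish each identity semantically by a case analysis on the positions of the witnesses. Evaluated at $r$, the formula $\until{\since{D}{C}{}}{A}{}$ asserts the existence of a future point $t > r$ at which $\since{D}{C}{}$ holds and with $A$ true throughout $(r,t)$; the point $s$ witnessing the inner Since---at which $D$ holds, with $C$ on $(s,t)$---then either lies weakly in the past, $s \le r$, or strictly in the future, $r < s < t$. In the future case the entire configuration lies to the right of $r$, so the formula is already pure future. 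In the past case $C$ must hold throughout $(r,t)$, so the assertion splits into a pure-past claim about where $D$ last held together with $C$ up to $r$, conjoined with a pure-future claim that $A$ and $C$ persist to a suitable point; one-sided limit behaviour at the internal witness points is precisely what forces the presence of $K^-$ (and dually $K^+$ in the symmetric identities), since in a dense order it cannot be recovered from strict Until or Since alone.

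With the identities in hand, the final step is a termination argument. I would assign to each formula a complexity measure that strictly decreases under every application of a separation identity---for instance, the maximal number of polarity alternations (future-over-past or past-over-future) along any branch of the syntax tree, refined lexicographically by formula size---and then invoke well-founded induction to conclude that the process ends in a separated formula. The main obstacle is clearly the second step: deriving the separation identities in the continuous setting. Unlike the discrete case there is no immediate successor or predecessor, so the familiar discrete separation rules break down; one must track open versus closed boundaries carefully, introduce the limit modalities $K^+$ and $K^-$ to record the missing one-sided information, and appeal to Dedekind-completeness to guarantee that the suprema and infima of the sets of candidate witness positions behave as the rewriting requires.
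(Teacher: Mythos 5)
First, a point of order: the paper offers no proof of Theorem~\ref{thm:gabbaySep} at all --- it is imported wholesale from~\cite{GHR94} --- so there is no ``paper proof'' to compare against directly. The closest internal analogue is the paper's own separation machinery for \textit{MTL} (Proposition~\ref{prop:oneStepElim} together with the rewriting argument for Lemma~\ref{lem:extract}), which follows exactly the template you describe: a finite stock of one-step extraction identities plus a termination measure on a suitable notion of mixed-nesting depth. Your outline is the classical Gabbay-style separation strategy, which is also what~\cite{GHR94} does, and the one identity you analyse in detail --- $\until{(\since{D}{C}{})}{A}{}$, split according to whether the Since-witness lies at or before the evaluation point or strictly between it and the Until-witness --- is correct and does decouple as you claim.

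The gaps sit exactly where the continuous-time proof is hard, and your sketch does not engage with them. (i) You treat only the case where the past operator occupies the \emph{eventuality} argument of Until; the genuinely difficult identities are those where it occupies the \emph{invariant} argument, i.e.\ patterns of the shape $\until{\theta}{((\since{D}{C}{}) \vee \chi)}{}$ (compare items (v)--(viii) of Proposition~\ref{prop:oneStepElim}). There the Since must hold throughout an open interval, its witnesses can accumulate at the left endpoint, and this is where Dedekind completeness and the modalities $K^+$, $K^-$ are actually forced --- not merely ``at the internal witness points'' of the eventuality case, which as your own analysis shows needs no limit operator. (ii) Your notion of separated arguments admits negation, but $\neg(\since{D}{C}{})$ is not itself a Since-formula, so DNF plus the distribution laws does not reduce matters to your listed mixed patterns; you need either a negation normal form taking the dual ($\fBox$-, $\pBox$- and $K^{\pm}$-type) operators as primitives, with separation identities for each, or identities that handle negated temporal operators directly. (iii) The termination measure is asserted rather than verified: the identities duplicate subformulas and their right-hand sides must be re-normalised, and one has to check (as the paper does for $\mathit{ud}$ in the proof of Lemma~\ref{lem:extract}) that every mixed nesting created by a rewrite has strictly smaller alternation measure than the one eliminated. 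As it stands your proposal is a faithful road map of the proof in~\cite{GHR94}, but the substance of that proof --- the complete, verified list of identities in the dense, Dedekind-complete setting and the check that the rewriting terminates --- is missing.
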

Note that the three classes of formulas in Theorem~\ref{thm:gabbaySep}
respectively refer to the present, future and past.  In this section
we derive an analogous result for \textit{MTL}\@.  We show that every \textit{MTL}
formula can be written as a Boolean combination of \emph{bounded},
\emph{distant future} and \emph{distant past formulas}.  Just as
Gabbay \textit{et al.} used syntactic forms for future and past
representations, our plan is to use natural forms for bounded, distant
future and distant past formulas.  
Crucially, the distant future and
distant past are allowed to overlap with the bounded present, unlike
in the result of Gabbay \textit{et al}.

Given an \textit{MTL} formula $\varphi$, we define the \emph{future-reach}
$\mathit{fr}(\varphi)$ and \emph{past-reach} $\mathit{pr}(\varphi)$
inductively as follows:
\begin{itemize}
\item $\mathit{fr}(p) = \mathit{pr}(p)=0$ for all propositions $p$,
\item $\mathit{fr}(\true) = \mathit{pr}(\true)= 0$,
\item $\mathit{fr}(\neg \varphi) = \mathit{fr}(\varphi)$, $\mathit{pr}(\neg \varphi) = \mathit{pr}(\varphi)$,
\item $\mathit{fr}( \varphi \wedge \psi) = \max\{\mathit{fr}(\varphi),\mathit{fr}(\psi)\}$,
\item $\mathit{pr}( \varphi \wedge \psi) = \max\{\mathit{pr}(\varphi),\mathit{pr}(\psi)\}$,
\item If $n=\inf(I)$ and $m=\sup(I)$:
\begin{itemize}
\item $\mathit{fr}(\until{\psi}{\varphi}{I}) = m + \max\{\mathit{fr}(\varphi),\mathit{fr}(\psi)\}$,
\item $\mathit{pr}(\since{\psi}{\varphi}{I}) = m + \max\{\mathit{pr}(\varphi),\mathit{pr}(\psi)\}$,
\item $\mathit{fr}(\since{\psi}{\varphi}{I}) = \max\{\mathit{fr}(\varphi),\mathit{fr}(\psi)-n\}$,
\item $\mathit{pr}(\until{\psi}{\varphi}{I}) = \max\{\mathit{pr}(\varphi),\mathit{pr}(\psi)-n\}$.
\end{itemize}
\end{itemize}
Intuitively the future-reach indicates how much of the future is required to determine the truth of an \textit{MTL} formula, and likewise for the past-reach.
%
%
Note that if $\varphi$ contains an unbounded $\until{}{}{}$ operator
then $\mathit{fr}(\varphi) = \infty$ and likewise if $\varphi$ contains an
unbounded $\since{}{}{}$ operator, $\mathit{pr}(\varphi) = \infty$.

We say an \textit{MTL} formula is \emph{syntactically separated} if it is a
Boolean combination of the following
\begin{itemize}
\item $\fDia_{=N}  \varphi$ where $\mathit{pr}(\varphi)<N-1$,
\item $\pDia_{=N} \varphi$ where $\mathit{fr}(\varphi)< N-1$, 
\item $\varphi$, where all intervals occurring in temporal operators
  are bounded.
\end{itemize}
We call formulas of the third kind above \emph{bounded}.  Note that
formulas with no occurrences of $\until{}{}{I}$ and $\since{}{}{I}$ are
included in the definition of bounded formulas.

\begin{example}
Consider the formula $\varphi= \fDia \pBox (p \rightarrow
\pDia_{=1}p)$.  Then
$\mathit{fr}(\varphi)=\mathit{pr}(\varphi)=\infty$.  We define an equivalent separated formula as follows.  First, write $\psi = p \rightarrow \pDia_{=1}p$.
Then $\varphi$ is equivalent to
\begin{align*}
& \pDia_{=1}(\psi \wedge \pBox \psi) \wedge \pBox_{(0,1)} \psi \wedge \psi\\ 
& \wedge
 \big( (\until{\psi}{\psi}{\leq 2}) 
\vee
         (\fBox_{\leq 2}\psi \wedge \fDia_{=2}(\until{\psi}{\psi}{})) \big) \, .
\end{align*}
\end{example}

\begin{theorem}\label{thm:sepThm}
Every MTL formula is equivalent to one which is syntactically separated.
\end{theorem}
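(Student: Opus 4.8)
The plan is to argue by structural induction on the MTL formula $\varphi$. The atomic and Boolean cases are immediate, since a Boolean combination of syntactically separated formulas is by definition syntactically separated. By the past/future symmetry of the separated normal form, it then suffices to treat the case $\varphi = \until{\varphi_2}{\varphi_1}{I}$, assuming inductively that $\varphi_1$ and $\varphi_2$ are already syntactically separated; the case of $\since{}{}{I}$ follows by the dual argument. Before attacking the Until, I would put each argument into a convenient normal form: treating every distant-future subformula $\fDia_{=N}\alpha$ and distant-past subformula $\pDia_{=M}\beta$ as a propositional literal, I rewrite $\varphi_1$ and $\varphi_2$ in disjunctive normal form, so that each becomes a disjunction of conjunctions of the shape (one bounded formula) $\wedge$ (a conjunction of distant literals and their negations).

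The core of the proof is a separation lemma that pulls the distant literals outside the temporal operator. I would first dispose of the comparatively easy case in which $I$ is bounded. Here the witness $t$ of the Until ranges over the bounded interval $r+I$, so the near part of the signal that the operator inspects is bounded; the difficulty is only that the arguments carry distant literals $\fDia_{=N}\alpha$, whose truth at a point $t \in r+I$ reads the value of $\alpha$ on a far window of bounded length, offset from $r$ by roughly $N$. I would capture this by a distant-future formula $\fDia_{=N'}\chi$ whose argument $\chi$ is a bounded formula encoding the required pattern of $\alpha$ over that far window (and symmetrically for distant-past literals), leaving the near behaviour to a purely bounded formula. It is exactly here that fractional constants are forced, as anticipated in Example~\ref{ex:fomtl}: the offsets $N'$ needed to align the near and far windows need not be integers even when the original constants are.

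The unbounded case $I = (0,\infty)$ (and more generally unbounded $I$) is where the real work lies. I would split the existential witness $t$ according to whether it lies within a fixed bounded distance of $r$ or beyond it. The near-witness contribution reduces to a bounded Until, handled as above. For the far witness, the requirement that $\varphi_1$ hold throughout the long interval $(r,t)$ is decomposed into a bounded near-part, expressed by a bounded $\fBox$, and a far-part, which is fused with ``$\varphi_2$ at $t$'' and re-expressed as an unbounded Until placed underneath a distant operator $\fDia_{=N}$, with $N$ and the fused argument arranged so that the reach condition $\mathit{pr}(\varphi) < N-1$ holds.

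The main obstacle, and the step I expect to be most delicate, is that $\varphi_1$ may itself contain distant-past literals. When $\varphi_1$ is required to hold across the whole of $(r,t)$ with $t$ far in the future, such a literal references points that slide continuously from before $r-1$ up to before $t-1$, crossing the present and thereby threatening to reintroduce an unbounded-past reference nested inside an unbounded-future context---precisely the nesting that separation must eliminate. Resolving this requires, as in Gabbay's original argument, a reduction that trades such a nested occurrence for formulas of strictly smaller complexity under a suitable well-founded measure (for instance the number of alternations between unbounded-future and unbounded-past operators, together with the nesting depth of distant operators), and then an argument that the reduction terminates. Verifying that each reduction step preserves equivalence while strictly decreasing the measure, and that the resulting distant operators meet the strict reach inequalities $\mathit{pr}(\varphi) < N-1$ and $\mathit{fr}(\varphi) < N-1$, is the bulk of the technical bookkeeping.
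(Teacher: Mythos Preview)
Your structural-induction plan differs substantially from the paper's argument, which does not induct on the formula at all but proceeds by a three-stage rewriting: (1) pull all unbounded operators out from under bounded ones via a list of explicit equivalences (Proposition~\ref{prop:oneStepElim}); (2) treat the resulting maximal bounded subformulas as fresh atoms and invoke Gabbay's \textit{LTL} separation theorem (Theorem~\ref{thm:gabbaySep}) as a black box to remove unbounded $\until{}{}{}$ from under unbounded $\since{}{}{}$ and vice versa; (3) split each remaining top-level unbounded operator into a bounded part and a $\fDia_{=N}$/$\pDia_{=N}$ part with $N$ chosen large enough for the reach inequality.

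The genuine gap in your plan is precisely the step you flag as ``the main obstacle.'' A distant-past literal $\pDia_{=M}\beta$ in a separated $\varphi_1$ need not have $\beta$ bounded: separation only constrains $\mathit{fr}(\beta)<M-1$, so $\beta$ may still contain arbitrarily nested unbounded $\since{}{}{}$. Hence $\until{\varphi_2}{(\pDia_{=M}\beta \wedge \cdots)}{}$ is, after pushing $\pDia_{=M}$ inward, an unbounded Until with an unbounded Since nested inside---exactly the configuration that Gabbay's theorem eliminates. You propose to handle this ``as in Gabbay's original argument'' via a well-founded reduction, but you supply neither the rewriting rules nor a verification that your suggested measure (alternation count plus distant-operator depth) actually decreases under them; Gabbay's proof is notoriously delicate on this point. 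Your induction hypothesis gives no leverage here either: having $\varphi_1$ already separated does not simplify the task, since the unbounded $\since{}{}{}$ is merely hidden under $\pDia_{=M}$. The paper's key move is to short-circuit all of this by reducing to Theorem~\ref{thm:gabbaySep} directly, after Step~1 has ensured that bounded subformulas can be treated as atoms.

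Two smaller corrections. In your bounded-$I$ case you call the argument $\chi$ of the extracted $\fDia_{=N'}\chi$ a ``bounded formula,'' but for the same reason as above the distant literal you are extracting may carry unbounded future operators, so $\chi$ need not be bounded---only its past-reach must be controlled. And separation itself does not force fractional constants: the paper's rewriting rules preserve the set of constants appearing in the input. Fractional endpoints enter only later, in the translation of bounded {\ONEMLO} formulas (Section~\ref{sec:bounded}).
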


\noindent To prove Theorem~\ref{thm:sepThm} our strategy is as follows:
\begin{enumerate}[Step 1. ]
\item Remove all unbounded $\until{}{}{}$ and $\since{}{}{}$ operators
  from within the scope of bounded operators.
\item Treating bounded formulas as atoms, apply
  Theorem~\ref{thm:gabbaySep} to remove unbounded $\until{}{}{}$
  operators from the scope of unbounded $\since{}{}{}$ operators and
  vice versa.  
\item Divide the top-level unbounded operators into formulas
  bounded by $N$ and formulas at least $N$ away for sufficiently large
  $N$ to separate these formulas.  This step may also place unbounded
  operators within the scope of bounded operators, but still maintains
  the separation of unbounded $\until{}{}{}$ and unbounded
  $\since{}{}{}$ operators.  Using Step 1, and observing that this
  does not introduce any new unbounded operators, we can move these
  unbounded operators to the top level and recursively apply the
  division to completely separate the formula.
\end{enumerate}

\subsubsection*{Step 0.  Translation to Normal Form}
We first introduce a normal form for \textit{MTL} formulas.  In
defining this we regard $\until{}{}{I}$, $\since{}{}{I}$, $\fBox_I$,
$\fDia_I$, $\pBox_I$, and $\pDia_I$ as primitive operators.  Then an
\textit{MTL} formula is said to be in \emph{normal form} if the
following all hold:
\begin{enumerate}[(i)]
\item The formula is written using the Boolean operators and the
  temporal connectives $\until{}{}{(0,\gamma)}$, $\since{}{}{(0,\gamma)}$,
  $\fBox_{(0,\gamma)}$, $\pBox_{(0,\gamma)}$, where $\gamma \in
  \mathbb{Q}_{\geq 0}\cup \{\infty\}$, and $\fDia_{=q}$ and $\pDia_{=q}$, where
  $q\in \mathbb{Q}_{\geq 0}$;
\item In any subformula $\until{\varphi_2}{\varphi_1}{I}$ or
  $\since{\varphi_2}{\varphi_1}{I}$, the outermost connective of
  $\varphi_1$ is not conjunction and the outermost connective of
  $\varphi_2$ is not disjunction;
\item No temporal operator occurs in the scope of $\fDia_{=q}$ or
  $\pDia_{=q}$;
\item Negation is only applied to propositional variables and bounded temporal operators.
\end{enumerate}

We can transform an \textit{MTL} formula into an equivalent normal form as
follows.  To satisfy (i) we eliminate connectives $\until{}{}{I}$
and $\since{}{}{I}$ in which the interval $I$ does not have left
endpoint $0$ using the equivalences
\begin{eqnarray*}
\until{\psi}{\varphi}{(p,q)} & \longleftrightarrow & \fBox_{(0,p)} \varphi \wedge \fDia_{=p} 
\big(\varphi \wedge (\until{\psi}{\varphi}{(0,q-p)})\big)\\
\since{\psi}{\varphi}{(p,q)} & \longleftrightarrow & \pBox_{(0,p)} \varphi \wedge \pDia_{=p} 
\big(\varphi \wedge (\since{\psi}{\varphi}{(0,q-p)})\big)
\end{eqnarray*}
and corresponding equivalences for left-closed and right-closed
intervals.

To satisfy (ii) we use the equivalences
\begin{eqnarray*}
\until{(\psi \vee \theta)}{\varphi}{I} & \longleftrightarrow &
(\until{\psi}{\varphi}{I}) \vee
(\until{\theta}{\varphi}{I})\\ \until{\theta}{(\varphi \wedge
  \psi)}{I} & \longleftrightarrow &
(\until{\theta}{\varphi}{I}) \wedge (\until{\theta}{\psi}{I})
\end{eqnarray*}
and their corresponding versions for $\since{}{}{I}$,
\begin{eqnarray*}
\since{(\psi \vee \theta)}{\varphi}{I} & \longleftrightarrow &
(\since{\psi}{\varphi}{I}) \vee
(\since{\theta}{\varphi}{I})\\ \since{\theta}{(\varphi \wedge
  \psi)}{I} & \longleftrightarrow &
(\since{\theta}{\varphi}{I}) \wedge (\since{\theta}{\psi}{I}) \, .
\end{eqnarray*}

To satisfy (iii) we use the equivalences
\begin{eqnarray*}
\fDia_{=q}(\varphi \wedge \psi) & \longleftrightarrow & \fDia_{=q}\varphi \wedge \fDia_{=q}\psi\\
\fDia_{=q}(\neg \varphi) & \longleftrightarrow & \neg \fDia_{=q} \varphi\\
\fDia_{=q} (\until{\psi}{\varphi}{I}) & \longleftrightarrow & \until{\fDia_{=q}\psi}{\fDia_{=q}\varphi}{I}\\
\fDia_{=q} (\since{\psi}{\varphi}{I}) & \longleftrightarrow & \since{\fDia_{=q}\psi}{\fDia_{=q}\varphi}{I}
\end{eqnarray*}
and the corresponding equivalences for $\pDia_{=q}$ to distribute
$\fDia_{=q}$ and $\pDia_{=q}$ across all other operators.

To satisfy (iv) we observe that the $K^+$ and $K^-$ operators can be
defined as bounded formulas, \textit{viz.}
\[ K^+(\varphi) \: \leftrightarrow\: \neg (\until{\true}{\neg \varphi}{<1})\qquad K^-(\varphi) \: \leftrightarrow\: \neg (\since{\true}{\neg \varphi}{<1})
\, .\]  Then we 
use the equivalences
\begin{eqnarray*}
\neg(\until{\psi}{\varphi}{}) & \longleftrightarrow &
\fBox \neg\psi \vee K^+(\neg \varphi) \vee\\
&&\quad(\until{(\neg\psi \wedge (\neg\varphi\vee K^+(\neg\varphi)))}{\neg\psi}{}) \\
\neg \fBox \varphi & \longleftrightarrow &
\until{\neg\varphi}{\true}{} 
\end{eqnarray*}
and their corresponding past versions to rewrite any subformula in
which negation is applied to an unbounded temporal operator.

\subsection*{Step 1. Extracting unbounded Until and Since}
Our goal in this subsection is the following lemma.
\begin{lemma}\label{lem:extract}
Every {MTL} formula $\varphi$ is equivalent to one in which no unbounded
temporal operator occurs within the scope of a bounded temporal
operator.
\end{lemma}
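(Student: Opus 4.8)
The plan is to work from the normal form of Step~0, so that every temporal connective is one of $\until{}{}{(0,\gamma)}$, $\since{}{}{(0,\gamma)}$, $\fBox_{(0,\gamma)}$, $\pBox_{(0,\gamma)}$, $\fDia_{=q}$ or $\pDia_{=q}$, and, by clause~(iii), no temporal operator sits inside a punctual $\fDia_{=q}$ or $\pDia_{=q}$. Consequently the only way an unbounded operator can lie inside a bounded one is that an unbounded $\until{}{}{}$, $\since{}{}{}$, $\fBox$ or $\pBox$ (the case $\gamma=\infty$) occurs in the scope of one of the above with $\gamma<\infty$. I would induct on the multiset of nesting depths of such offending unbounded subformulas, at each step choosing an innermost bounded operator $B$ --- evaluated, say, at a reference point $r$ --- one of whose immediate temporal arguments contains an unbounded subformula $\beta$, and eliminating that nesting.

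The engine is a bounded-truncation identity for unbounded future formulas (the past cases being symmetric). Fix $N\in\mathbb{N}$ strictly larger than the window size $\gamma$ of $B$, so that every point $t$ quantified by $B$ lies in $(r,r+\gamma)\subseteq(r,r+N)$. For $\beta=\until{\beta_2}{\beta_1}{}$ one splits the witness for $\beta$ at $t$ according to whether it occurs before or after the fixed absolute point $r+N$:
\[ \beta \text{ at } t \ \Longleftrightarrow\ \big(\exists s\in(t,r{+}N)\colon \beta_2(s)\wedge \beta_1\text{ on }(t,s)\big)\ \vee\ \big(\beta_1\text{ on }(t,r{+}N)\wedge c\big), \]
where $c$ abbreviates the single truth value ``$\beta$ holds at $r+N$''. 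The crucial point is that $c$ does not depend on $t$: it is constant over the whole window of $B$, and from $r$ it is exactly $\fDia_{=N}\beta$. A matching identity, with $\fBox_{(0,N)}$ in place of the local until and again the same constant $c$, handles $\beta=\fBox\beta_1$, and $\pDia_{=N}$ replaces $\fDia_{=N}$ for $\since{}{}{}$ and $\pBox$.

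To remove the nesting I would then case-split, at the level of $r$, on the value of $c=\fDia_{=N}\beta$: in the branch where it holds I substitute $\true$ for $c$ inside $B$ and conjoin the whole subformula with $\fDia_{=N}\beta$, and dually with $\false$ and $\neg\fDia_{=N}\beta$ in the other branch. After the substitution the arguments of $B$ contain only bounded formulas, so $B$ has become a bounded formula, while every unbounded operator has been lifted out of $B$ and now appears only under the punctual modality $\fDia_{=N}$ (resp.\ $\pDia_{=N}$). Re-applying clause~(iii) of the normal form distributes this punctual modality inward --- for instance $\fDia_{=N}(\until{\beta_2}{\beta_1}{})$ becomes $\until{\fDia_{=N}\beta_2}{\fDia_{=N}\beta_1}{}$ --- so that no temporal operator remains inside a punctual one; the result is again in normal form and the chosen offending nesting has been destroyed without creating a deeper one, so the induction measure strictly decreases.

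The main obstacle is the local disjunct $\exists s\in(t,r{+}N)\colon\beta_2(s)\wedge\beta_1\text{ on }(t,s)$ (and likewise $\beta_1\text{ on }(t,r{+}N)$): although bounded, its horizon $r+N-t$ depends on $t$, whereas \textit{MTL} only offers temporal operators with fixed rational constants. Splitting instead at the fixed-distance point $t+N$ keeps the horizon constant but makes the pulled-out value ``$\beta$ at $t+N$'' vary with $t$, and re-normalising $\fDia_{=N}\beta$ merely returns an unbounded until, so the naive split makes no progress; it is precisely the absolute split at $r+N$ that genuinely detaches $\beta$ from $B$. I would reconcile the two by anchoring the local part to the outer reference $r$ rather than to $t$: since the window of $B$ has fixed size $\gamma<N$, the constraint ``$s<r+N$'' is, relative to $r$, the fixed-distance condition $s-r<N$, which can be enforced by a bounded guard together with the punctual marker $\fDia_{=N}$ placed at $r$; one over-approximates by the fixed-interval $\until{\beta_2}{\beta_1}{(0,N)}$ at $t$ and deletes the spurious region $[\,r{+}N,\,t{+}N)$ using this marker, handling the limiting behaviour at $r+N$ with the $K^{+}$ and $K^{-}$ modalities. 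Verifying that these over- and under-approximations coincide on the window, for all four shapes of $\beta$ and both inner arguments of $B$, is the one genuinely delicate calculation of the lemma.
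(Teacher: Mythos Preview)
Your high-level strategy is the paper's: work in normal form, pull the unbounded $\beta$ out of the bounded operator $B$ via a punctual modality anchored at the outer reference point $r$, and terminate by induction on a nesting measure. The gap is exactly where you flag it. After the absolute split at $r+N$ the local disjunct inside $B$ has horizon $r+N-t$ varying with $t$, and you propose to ``over-approximate by $\until{\beta_2}{\beta_1}{(0,N)}$ at $t$ and delete the spurious region $[r{+}N,t{+}N)$ using this marker''. That deletion cannot be carried out: from the point $t$ inside $B$ no \textit{MTL} formula refers to the absolute position $r+N$; the marker $\fDia_{=N}$ lives at $r$, and your case split on $c$ tells you only the truth value of $\beta$ at $r+N$, not where $r+N$ sits relative to $t$. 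So after the proposed correction the local disjunct is still not an \textit{MTL} formula, and the ``delicate calculation'' you defer does not go through as sketched.

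The paper's resolution is that no clean split is needed. Taking $N=q$ equal to the window of $B$, one has, for every $t\in(r,r+q)$,
\[
\until{\psi}{\varphi}{}\text{ at }t
\;\Longleftrightarrow\;
\until{\psi}{\varphi}{<q}\text{ at }t
\;\text{ or }\;
\big(\fBox_{<q}\varphi\text{ at }t\ \text{and}\ \until{\psi}{\varphi}{}\text{ at }r+q\big).
\]
The two disjuncts overlap --- the first over-shoots to $t+q>r+q$, the second demands $\varphi$ on all of $(t,t+q)\supsetneq(t,r+q)$ --- yet together they are exactly $\until{\psi}{\varphi}{}$ at $t$, because $r+q$ always lies in $(t,t+q)$, so the bounded piece and the tail from $r+q$ glue. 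The third conjunct is independent of $t$ and hence factors out of $B$; the remaining pieces $\until{\psi}{\varphi}{<q}$ and $\fBox_{<q}\varphi$ have fixed horizon $q$. Distributing the disjunction through $B$ yields the concrete \textit{MTL} equivalences of Proposition~\ref{prop:oneStepElim}. Note also that no case split on $c$ is required, so the unbounded part is never negated and the clause~(iv) rewrites are not re-invoked, which keeps the termination argument clean.
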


The proof of this lemma relies on Proposition~\ref{prop:oneStepElim},
whose proof is straightforward.

\begin{proposition}\label{prop:oneStepElim}
For all $q \in \mathbb{Q}_{\geq 0}$, the following equivalences and their temporal duals hold over all signals.
\begin{enumerate}[(i) ]
\addtolength{\itemsep}{3ex}
\item \maybeframe{\parbox{0.4\textwidth}{\[\begin{array}[t]{c}\until{\bigl((\until{\psi}{\varphi}{})\wedge \chi\bigr)}{\theta}{<q}\\[1.5ex]\leftrightarrow\\[1.5ex]\until{\bigl((\until{\psi}{\varphi}{<q})\wedge \chi\bigr)}{\theta}{<q}\quad\vee\\\bigg(\bigl(\until{(\fBox_{<q} \varphi \wedge \chi)}{\theta}{<q}\bigr) \: \wedge\: \fDia_{=q} (\until{\psi}{\varphi}{})\bigg)\end{array}\]}}

\item \maybeframe{\parbox{0.4\textwidth}{\[\begin{array}[t]{c}\until{(\fBox \varphi \wedge \chi)}{\theta}{<q} \\[1.5ex]\leftrightarrow \\[1.5ex] \bigl(\until{(\fBox_{<q} \varphi \wedge \chi)}{\theta}{<q}\bigr) \wedge \fDia_{=q} \fBox \varphi\end{array}\]}}
\item \maybeframe{\parbox{0.4\textwidth}{\[\begin{array}[t]{c}\until{\bigl((\since{\psi}{\varphi}{})\wedge \chi\bigr)}{\theta}{<q} \\[1.5ex] \leftrightarrow \\[1.5ex]  
\until{\bigl((\since{\psi}{\varphi}{<q})\wedge \chi\bigr)}{\theta}{<q}\quad\vee\\\bigg(\bigl(\until{(\pBox_{<q} \varphi \wedge \chi)}{\theta}{<q}\bigr) \: \wedge\: \since{\psi}{\varphi}{}\bigg)\end{array}\]}}
\item \maybeframe{\parbox{0.4\textwidth}{\[\begin{array}[t]{c}\until{(\pBox \varphi \wedge \chi)}{\theta}{<q} \\[1.5ex] \leftrightarrow \\[1.5ex] \bigl(\until{(\pBox_{<q} \varphi \wedge \chi)}{\theta}{<q}\bigr) \wedge \pBox \varphi\end{array}\]}}
\item \maybeframe{\parbox{0.4\textwidth}{\[\begin{array}[t]{c} \until{\theta}{\bigl((\until{\psi}{\varphi}{})\vee \chi\bigr)}{<q} \\[1.5ex] \leftrightarrow \\[1.5ex]
\until{\theta}{\bigl((\until{\psi}{\varphi}{<q})\vee \chi\bigr)}{<q}\quad \vee\\\Big[\bigl(\until{(\fBox_{<q}\varphi)}{(\until{\psi}{\varphi}{<q}) \vee \chi \bigr)}{<q}\\\:\wedge\: \fDia_{<q}\theta \: \wedge\: \fDia_{= q}(\until{\psi}{\varphi}{})\Big]\end{array}\]}}
\item \maybeframe{\parbox{0.4\textwidth}{\[\begin{array}[t]{c} \until{\theta}{\bigl((\fBox \varphi)\vee \chi\bigr)}{<q} \\[1.5ex] \leftrightarrow \\[1.5ex]
\until{\theta}{\chi}{<q}\quad\vee\\\bigl(\until{(\fBox_{<q}\varphi)}{\chi}{<q}\:\wedge\: \fDia_{<q}\theta \: \wedge\: \fDia_{= q}(\fBox \varphi)\bigr)\end{array}\]}}
\item \maybeframe{\parbox{0.4\textwidth}{\[\begin{array}[t]{c} \until{\theta}{\bigl((\since{\psi}{\varphi}{})\vee \chi\bigr)}{<q} \\[1.5ex] \leftrightarrow \\[1.5ex]
\until{\theta}{\bigl((\since{\psi}{\varphi}{<q})\vee \chi\bigr)}{<q}\quad\vee\\ \Big[\until{\theta}{\bigl(\pBox_{<q}\varphi \vee (\since{\psi}{\varphi}{<q}) \vee \chi\bigr)}{<q}\:\wedge\: (\since{\psi}{\varphi}{})\Big] \end{array}\]}}
\item \maybeframe{\parbox{0.4\textwidth}{\[\begin{array}[t]{c}\until{\theta}{\bigl(\pBox \varphi\vee \chi\bigr)}{<q} \\[1.5ex] \leftrightarrow \\[1.5ex]
\until{\theta}{\chi}{<q}\:\vee\\\Big[\bigl(\until{\theta}{(\pBox_{<q}\varphi \vee \chi)}{<q}\bigr)\: \wedge\: \pBox \varphi\Big].\end{array}\]}}
\end{enumerate}
\end{proposition}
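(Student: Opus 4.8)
The plan is to treat each of the eight equivalences, together with its temporal dual, as a semantic identity and to verify it directly from the definition of the satisfaction relation. Thus I fix an arbitrary signal $f$ and time point $r$, unfold the semantics of the outermost bounded operator $\until{}{}{<q}$ on both sides, and establish each direction of the biconditional. On the left-hand side the outer operator always contributes a witness $t \in (r,r+q)$ at which its target holds and across whose initial segment $(r,t)$ its hold-part holds; the entire argument then turns on locating the witness of the \emph{inner} unbounded operator relative to the window $(t,t+q)$. The single geometric fact driving every case is that, since $t \in (r,r+q)$, we have $r+q < t+q$, so the boundary point $r+q$ lies \emph{strictly inside} $(t,t+q)$; moreover $r+q$ depends only on $r$ and $q$, not on the chosen $t$.

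The cases in which the inner operator is a \emph{future} operator, namely (i), (ii), (v) and (vi), all follow one template. Take (i), where the target of the outer until is $(\until{\psi}{\varphi}{})\wedge \chi$: at the witness $t$ the inner until $\until{\psi}{\varphi}{}$ has some witness $s>t$, and I split on whether $s<t+q$ or $s\geq t+q$. In the first subcase $s-t\in(0,q)$, so $\until{\psi}{\varphi}{<q}$ already holds at $t$ and the first disjunct of the right-hand side is satisfied. In the second subcase $\varphi$ holds throughout $(t,t+q)$, yielding $\fBox_{<q}\varphi$ at $t$ and hence the conjunct $\until{(\fBox_{<q}\varphi \wedge \chi)}{\theta}{<q}$; and because $r+q\in(t,t+q)$, the inner until persists at the fixed boundary point $r+q$, which is exactly $\fDia_{=q}(\until{\psi}{\varphi}{})$ at $r$. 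The converse direction reverses this, using once more that $\fBox_{<q}\varphi$ at $t$ covers $r+q$, so the bounded piece and the anchored piece glue into a single unbounded until at $t$. Cases (ii) and (vi) are the degenerate always-beyond-the-window instances of this (the content of $\fBox\varphi$ is infinite, so only the anchored form arises, giving a conjunction rather than a disjunction), while (v) and (vi) carry the inner operator in the hold-part and therefore require an auxiliary $\fDia_{<q}\theta$ conjunct to record that the outer until's target $\theta$ is met within the window.

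The remaining cases (iii), (iv), (vii) and (viii), in which the inner operator is a \emph{past} operator, are handled by a dual mechanism: instead of anchoring at the boundary point $r+q$ with $\fDia_{=q}$, one replaces the operator by its bounded $<q$ variant inside the window and adds a \emph{top-level} conjunct, namely the unbounded $\since{\psi}{\varphi}{}$ or $\pBox\varphi$ evaluated at the base point $r$. The justification rests on the monotone behaviour of past formulas: the deep-past information recorded by the global conjunct at $r$ lies below $r$ and is therefore common to every point of the window, while the bounded variant ($\since{\psi}{\varphi}{<q}$ or $\pBox_{<q}\varphi$) at an in-window point $u$ supplies the part of the past lying in $[r,u)$; together these make the unbounded past formula hold simultaneously at $r$ and at every $u \in (r,t)$, reconstructing exactly the original target or hold-part.

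I expect the only real work to be bookkeeping rather than ideas: tracking open versus closed endpoints (and the associated left- and right-limit behaviour at $t$ and at $r+q$), checking the strict inequality $r+q<t+q$ in each subcase, and mechanically reproducing each argument for the seven omitted temporal duals. Density of $\mathbb{R}$ guarantees that no witness is lost when passing between an open window and its boundary point, so these endpoint subtleties never obstruct the equivalences; they merely make the verification tedious, which is why the proof is described as straightforward.
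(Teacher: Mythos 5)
The paper offers no proof of this proposition---it is merely asserted to be straightforward---so a direct unfolding of the satisfaction relation is indeed the intended route, and your treatment of the cases where the unbounded operator sits in the \emph{target} of the outer $\until{}{}{<q}$ (items (i)--(iv)), as well as of the past-operator hold-part cases (vii) and (viii), is sound: the split on whether the inner witness falls inside the window $(t,t+q)$ (resp.\ $(t-q,t)$), together with the observation that $t<r+q<t+q$, is exactly what glues the bounded piece to the piece anchored at $r+q$ (resp.\ to the top-level past conjunct at $r$).

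The gap is in items (v) and (vi), which you dispose of with the remark that they ``carry the inner operator in the hold-part and therefore require an auxiliary $\fDia_{<q}\theta$ conjunct.'' There the inner formula must be analysed at \emph{every} $u$ in the open interval $(r,t)$, not at the single outer witness $t$, and the set of $u$ at which the bounded variant fails need not have a least element---it can accumulate at $r$. Your single-witness case split does not address this, and carrying the verification out carefully shows that the right-hand sides of (v) and (vi) as printed are \emph{not} consequences of the left-hand sides: take $\chi\equiv\false$, $\varphi\equiv\true$ and $\theta$ true at some point of $(r,r+q)$; then the left side of (vi) holds at $r$, yet both disjuncts of its right side contain an $\until{}{}{<q}$ whose hold-part is $\chi\equiv\false$, hence unsatisfiable over a dense order. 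The repair, consistent with the correctly stated past duals (vii) and (viii), is to keep $\theta$ as the target and add $\fBox_{<q}\varphi$ as a further disjunct of the hold-part (the $\fDia_{<q}\theta$ conjunct then becomes redundant); with that form your gluing argument at the anchor point $r+q$ does go through uniformly in $u$. As it stands, your proposal neither proves (v) and (vi) nor detects that they require this adjustment, so the claim that only ``bookkeeping'' remains is not justified.
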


\subsubsection*{Proof of Lemma~\ref{lem:extract}}
Define the \emph{unbounding depth} $\mathit{ud}(\varphi)$ of an \textit{MTL}
formula $\varphi$ to be the modal depth of $\varphi$, counting only
unbounded temporal operators.  Thus we have
\[ \mathit{ud}(\until{\varphi_2}{\varphi_1}{I}) =
   \left\{ \begin{array}{ll}
   \max(\mathit{ud}(\varphi_1),\mathit{ud}(\varphi_2)) &\mbox{$I$ bounded}\\
   \max(\mathit{ud}(\varphi_1),\mathit{ud}(\varphi_2))+1 & \mbox{otherwise}
   \end{array}\right . \]
with similar clauses for the other temporal operators.

Now suppose that $\varphi$ is an \textit{MTL} formula in normal form in which
some unbounded temporal operator occurs within the scope of a bounded
temporal operator.  Then some subformula of $\varphi$ (or its temporal dual) matches the
top side of one of the equivalences in
Proposition~\ref{prop:oneStepElim}.  Pick such a subformula $\psi$
with maximum unbounding depth $\mathit{ud}(\psi)$ and replace it with
the bottom side $\psi'$ of the corresponding equivalence.  Notice
that all subformulas of $\psi'$ whose outermost connective is a
bounded temporal operator other than $\fDia_{=q}$ and $\pDia_{=q}$
have unbounding depth strictly less than $\mathit{ud}(\psi)$.  Finally
rewrite $\psi'$ to normal form, in particular pushing the newly
introduced $\fDia_{=q}$ and $\pDia_{=q}$ operators inward.  Notice
that this last step does not increase the maximum unbounding depth.

This rewriting process must eventually terminate, yielding a formula
in which no unbounded operator remains within the scope of a bounded
operator.

\subsection*{Step 2. Extracting Since from Until and vice-versa}
Now suppose we have an \textit{MTL} formula in which no unbounded temporal
operator occurs within the scope of a bounded operator.  If we replace
each bounded subformula $\theta$ with a new proposition $P_\theta$, the
resulting formula is now an \textit{LTL} formula equivalent to our original
formula for suitable interpretations of the $P_\theta$.  From
Theorem~\ref{thm:gabbaySep} we know that this formula is equivalent to
a Boolean combination of:
\begin{itemize}
\item  atomic formulas,
\item formulas of the form
$\until{\varphi_1}{\varphi_2}{}$ such that $\varphi_1$ and $\varphi_2$
use only $\until{}{}{}$ and $K^-$,
\item formulas of the form
$\since{\varphi_1}{\varphi_2}{}$ such that $\varphi_1$ and $\varphi_2$
use only $\since{}{}{}$ and $K^+$.
\end{itemize}

Recalling from Step 0 that we can express the operators $K^+$ and
$K^-$ using bounded operators, and also replacing each proposition
$P_\theta$ with its associated bounded formula $\theta$, we obtain:
\begin{lemma}\label{lem:separate}
Every {MTL} formula is equivalent to a Boolean combination of: 
\begin{itemize}
\item bounded formulas,
\item formulas that use arbitrary $\until{}{}{I}$ but only bounded
  $\since{}{}{I}$,
\item formulas that use arbitrary $\since{}{}{I}$ but only bounded
  $\until{}{}{I}$
\end{itemize}
\end{lemma}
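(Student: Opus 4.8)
The plan is to bootstrap off Gabbay's separation of \textit{LTL} (Theorem~\ref{thm:gabbaySep}) by treating bounded formulas as opaque propositional atoms, so that an \textit{MTL} formula becomes, after abstraction, a genuine \textit{LTL} formula to which the theorem applies directly. The essential preprocessing is to invoke Lemma~\ref{lem:extract}, which lets me assume the given formula $\varphi$ has no unbounded temporal operator within the scope of a bounded one; this confinement of the bounded operators to the lower parts of the syntax tree is exactly what makes the abstraction produce an \textit{LTL} skeleton rather than a mixed formula.

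Concretely, I would next identify the maximal bounded subformulas $\theta_1,\ldots,\theta_k$ of $\varphi$---the maximal subformulas in which every temporal operator is bounded---and introduce a fresh proposition $P_{\theta_i}$ for each. Writing $\widehat\varphi$ for the result of substituting $P_{\theta_i}$ for $\theta_i$, the property secured by Lemma~\ref{lem:extract} guarantees that the only temporal operators surviving in $\widehat\varphi$ are unbounded $\until{}{}{}$ and $\since{}{}{}$, so $\widehat\varphi$ is an \textit{LTL} formula over the enlarged alphabet. Interpreting each $P_{\theta_i}$ to hold precisely where $\theta_i$ holds makes $\widehat\varphi$ equivalent to $\varphi$ on every signal.

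I would then apply Theorem~\ref{thm:gabbaySep} to $\widehat\varphi$, yielding an equivalent Boolean combination of atoms, ``future'' formulas built only from $\until{}{}{}$ and $K^-$, and ``past'' formulas built only from $\since{}{}{}$ and $K^+$. Substituting each $\theta_i$ back for $P_{\theta_i}$, and expanding $K^+$ and $K^-$ into the bounded operators exhibited in Step~0, produces a Boolean combination of \textit{MTL} formulas equivalent to the original $\varphi$, which is the candidate separated form.

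The remaining, and most delicate, step is to confirm this candidate has the shape claimed in the lemma. The key point is that in a reconstituted future formula every $\since{}{}{I}$ operator is bounded: such operators can enter only through a back-substituted $\theta_i$ (bounded by construction) or through the expansion of $K^-$ (itself a bounded since), whereas the unbounded $\until{}{}{}$ inherited from Gabbay's future formulas remains an arbitrary until. The symmetric argument treats the past formulas. I expect this bookkeeping---verifying after back-substitution that no unbounded since can leak into a future formula, and dually---to be the main obstacle, since it is precisely where the nesting guarantee of Lemma~\ref{lem:extract} and the boundedness of the $K^{\pm}$ encodings have to be used in combination.
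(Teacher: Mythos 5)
Your proposal is correct and follows essentially the same route as the paper: apply Lemma~\ref{lem:extract}, abstract the maximal bounded subformulas as fresh propositions to obtain an \textit{LTL} formula, invoke Theorem~\ref{thm:gabbaySep}, then substitute back and expand $K^+$ and $K^-$ as bounded operators. The final shape-check you flag as delicate is exactly the observation the paper relies on, namely that the only $\since{}{}{I}$ operators in a reconstituted future formula come from bounded subformulas or from the bounded encoding of $K^-$, and dually for past formulas.
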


\subsection*{Step 3. Completing the separation}
Now suppose we have an \textit{MTL} formula $\theta$ that does not contain
unbounded $\since{}{}{}$.  We prove by induction on the number of
unbounded $\until{}{}{}$ operators that $\theta$ is equivalent to a
syntactically separated formula.  Clearly if $\theta$ contains no
unbounded $\until{}{}{}$ operators then it is bounded and therefore
syntactically separated.  Otherwise, by applying
Lemma~\ref{lem:extract} and observing that it does not introduce
unbounded $\until{}{}{}$ operators, we may assume that $\theta =
\until{\psi}{\varphi}{}$ where $\varphi$ and $\psi$ have strictly
fewer unbounded $\until{}{}{}$ operators than $\theta$.  As $\theta$
does not contain unbounded $\since{}{}{}$ operators, $\mathit{pr}(\theta)$ is
finite, so choose $N>\mathit{pr}(\theta)+1$.  Next we apply the following 
equivalence 
\begin{align*}
\until{\psi}{\varphi}{} \quad \longleftrightarrow \quad & \until{\psi}{\varphi}{<N}\\ 
\vee \big(\fBox_{< N} \varphi \wedge \fDia_{=N} (\psi \vee (\varphi \wedge \until{\psi}{\varphi}{}))\big) \, .
\end{align*}

Now $\mathit{pr}(\psi \vee (\varphi \wedge \until{\psi}{\varphi}{})) =
\mathit{pr}(\theta)<N-1$, and the subformulas $\until{\psi}{\varphi}{<N}$ and
$\fBox_{< N} \varphi$ have strictly fewer unbounded $\until{}{}{}$
operators than $\theta$.  So by the induction hypothesis the formula on
the right hand side of the above equivalence is equivalent to one that
is syntactically separated, completing the inductive step.  Similarly
$\since{}{}{}$ formulas that do not contain unbounded $\until{}{}{}$
operators are equivalent to syntactically separated formulas.
Applying these observations to Lemma~\ref{lem:separate} gives our main
result, which we repeat here for completeness.

\begin{theoremNo}{\ref{thm:sepThm}}
Every {MTL} formula is equivalent to a Boolean combination of:
\begin{itemize}
\item $\fDia_{=N}  \varphi$ where $\mathit{pr}(\varphi)<N-1$,
\item $\pDia_{=N} \varphi$ where $\mathit{fr}(\varphi)< N-1$, and
\item $\varphi$ where all intervals occurring in the temporal operators are bounded.
\end{itemize}
\end{theoremNo}

\section{Expressive completeness on bounded formulas}\label{sec:bounded}
In this section we show expressive completeness of \textit{MTL} for a fragment
of {\ONEMLO} consisting of \emph{bounded formulas}, i.e., formulas
$\varphi(x)$ that refer only to a bounded interval around $x$.

Given terms $t_2$ and $t_2$, define $\mathrm{Bet}(t_1,t_2)$ to consist
of {\ONEMLO} formulas in which
\begin{enumerate}
\item[(i)] each subformula $\exists z\, \psi$ has the form
$\exists z\,((t_1 \leq z<t_2) \wedge \chi)$,
  i.e., each quantifier is relativized to the half-open interval between
  $t_1$ (inclusive) and $t_2$ (exclusive);
\item[(ii)] in each atomic subformula $P(t)$ the term
  $t$ is a bound occurrence of a variable.
\end{enumerate}

Clauses (i) and (ii) ensure that a formula in $\mathrm{Bet}(t_1,t_2)$
only refers to the values of monadic predicates on points in the half-open
interval $[t_1,t_2)$.  We say that a formula $\varphi(x)$ in
$\mathrm{Bet}(x-N,x+N)$ is \emph{$N$-bounded} and that $\varphi(x)$ in
$\mathrm{Bet}(x,x+1)$ is a \emph{unit formula}.

Observe that in a unit formula the only essential use of the $+1$
function is in specifying the range of the quantified variables.  More
precisely, we have the following proposition, where $\psi[t/y]$
denotes the formula obtained by substituting term $t$ for all free
occurrences of variable $y$ in $\psi$:

\begin{proposition}
For any unit formula $\varphi(x)$ there is an {\MLO} formula $\psi \in
\mathrm{Bet}(x,y)$ such that $\varphi$ is equivalent to
$\psi[(x+1)/y]$.
\label{prop:unit}
\end{proposition}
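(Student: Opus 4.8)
The plan is to exploit the single defining feature of a unit formula: under its quantifier relativizations, the free variable $x$ and every bound variable range over the half-open interval $[x,x+1)$ of length exactly one. Consequently $+1$, applied to any such variable, produces a value in $[x+1,\infty)$ and hence lies strictly above every point the formula can actually refer to. I would use this ``overflow'' to show that every genuine occurrence of $+1$ — other than those in the relativization guards — is semantically inert and can be removed; abstracting the remaining guard term $x+1$ to a fresh variable $y$ then yields the required {\MLO} formula.

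Concretely, I would first record the confinement claim: at any point in $\varphi$ the variables in scope are $x$ together with the bound variables of the enclosing quantifiers, and each is constrained to $[x,x+1)$ (the bound ones by their guard $x\le z<x+1$, and $x$ trivially). Hence any comparison atom of the matrix is evaluated only when all of its variables lie in $[x,x+1)$. Writing each {\ONEMLO} term in the normal form $v+k$ with $v$ a variable and $k\in\mathbb{N}$, I would simplify an arbitrary comparison atom $v+j < w+k$ by case analysis on $j$ and $k$: if $j=k$ it is equivalent to the $+1$-free atom $v<w$; if $j<k$ then, since $w-v\in(-1,1)$ and $k-j\ge 1$ force $(w+k)-(v+j)>0$, it is equivalent to $\true$; and symmetrically if $j>k$ it is equivalent to $\false$. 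The crucial point, and the reason unit formulas are exactly the right object, is that the interval has length one and is half-open, so a shift difference of at least one strictly dominates the spread $w-v$.

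I would apply these replacements to every comparison atom except the relativization guards, obtaining a formula $\varphi'$ equivalent to $\varphi$ (each replacement preserves truth at any valuation placing its variables in $[x,x+1)$, and these are the only valuations that affect the truth of $\varphi$, since every such atom lies within the scope of the guards confining its variables) in which $+1$ survives only inside the guards $z<x+1$. Replacing the guard term $x+1$ throughout by a fresh variable $y$ then gives a formula $\psi$: it contains no $+1$, so it is an {\MLO} formula; its quantifiers are now relativized to $[x,y)$ and its predicate atoms are untouched, so $\psi\in\mathrm{Bet}(x,y)$; and since every occurrence of $y$ arose from an occurrence of $x+1$, we have $\psi[(x+1)/y]=\varphi'$ syntactically, whence $\psi[(x+1)/y]$ is equivalent to $\varphi$, as required.

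The step I expect to demand the most care is the comparison simplification together with its justification: one must treat the half-open endpoints precisely (so that, for instance, $z<x+1$ is genuinely always true while $x\le z$ is genuinely a constraint), and one must be careful to exempt the relativization guards from simplification — collapsing $z<x+1$ to $\true$ would erase the very relativization that makes the confinement argument valid, so the guards must be protected and altered only by the abstraction $x+1\mapsto y$.
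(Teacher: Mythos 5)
Your proposal is correct and follows essentially the same route as the paper's own proof: a case analysis on the integer offsets $k_1,k_2$ in a comparison $u+k_1 < v+k_2$ (or equality), exploiting the fact that all variables are confined to an interval of length one so that any nonzero shift difference dominates the spread, reducing every non-guard use of $+1$ to $\true$, $\false$, or a $+1$-free atom. Your additional care about the half-open endpoints, protecting the relativization guards, and the final abstraction $x+1\mapsto y$ merely spells out details the paper leaves implicit.
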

\begin{proof}
We show that all uses of the $+1$ function in $\varphi$ other than
to specify the range of quantified variables can be
eliminated.

Let $u,v$ be bound variables and $k_1,k_2 \in \mathbb{N}$.  
Since $u,v$ range over an open interval of length $1$
an
inequality of the form $u + k_1 < v + k_2$ can be replaced by
(i)~$u<v$, if $k_1 = k_2$; (ii)~$\true$, if $k_1 < k_2$; and
(iii)~$\false$ otherwise.  Likewise an equality of the form
$u+k_1 = v+k_2$ can be replaced by $u=v$ if $k_1=k_2$, and $\false$
otherwise.  
\end{proof}

The main result of this section is:
\begin{theorem}\label{thm:bounded}
For every $N$-bounded formula $\varphi(x)$ there exists an equivalent
MTL formula $\varphi^\dag$.
\end{theorem}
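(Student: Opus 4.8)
The plan is to prove Theorem~\ref{thm:bounded} in three stages: from $N$-bounded formulas to unit formulas, from unit formulas to pure {\MLO} formulas over a single window, and finally from the latter to \textit{MTL}. Only the last stage is genuinely difficult; the first two are essentially bookkeeping.

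First I would reduce to the case of unit formulas. Given $\varphi(x) \in \mathrm{Bet}(x-N, x+N)$, I partition the window $[x-N, x+N)$ into the $2N$ unit intervals $I_k = [x-N+k, x-N+k+1)$ for $0 \le k < 2N$, and split every quantifier $\exists z\,((x-N \le z < x+N) \wedge \cdots)$ into the disjunction over $k$ of $\exists z\,((x-N+k \le z < x-N+k+1) \wedge \cdots)$. Once every quantifier is confined to a single $I_k$, every order comparison between variables lying in distinct intervals has a fixed truth value; after evaluating these comparisons, the surviving atoms each mention variables of a single interval. A routine quantifier-pulling argument then rewrites $\varphi$ as a Boolean combination of formulas each referring only to predicate values inside one $I_k$, and each such formula is, up to the shift $x \mapsto x-N+k$, a unit formula $\mu_k$. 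Assuming an \textit{MTL} translation $\mu_k^\dag$ of unit formulas, I recover an \textit{MTL} formula for $\varphi$ by prefixing the punctual modality that reaches the relevant interval from $x$: namely $\pDia_{=N-k}\,\mu_k^\dag$ when $k<N$, the formula $\mu_k^\dag$ itself when $k=N$, and $\fDia_{=k-N}\,\mu_k^\dag$ when $k>N$. This leaves the unit case.

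For a unit formula $\varphi(x)$, Proposition~\ref{prop:unit} supplies an {\MLO} formula $\psi \in \mathrm{Bet}(x,y)$ with $\varphi \equiv \psi[(x+1)/y]$, so $\psi$ refers purely to the order and the monadic predicates on $[x, x+1)$, with no essential arithmetic. I would then translate $\psi$, viewed as an {\MLO} formula over the bounded order $[x,x+1)$ evaluated at its left endpoint, into \textit{MTL}. One route is to invoke Kamp's theorem~\cite{Kamp68} to first rewrite $\psi$ as an \textit{LTL} formula over this bounded order and then localise each temporal connective; alternatively one can argue directly by induction on quantifier rank, using a composition (Feferman--Vaught/Shelah) argument to reduce the type of the window to the types of its two halves about a chosen split point. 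In either formulation the indispensable ingredient is subdivision of the window at rational points together with the punctual modalities $\fDia_{=q}$ and $\pDia_{=q}$, which let the formula reach the endpoints $x$ and $x+1$ exactly. This is precisely why fractional constants are unavoidable, as already foreshadowed by Example~\ref{ex:fomtl}.

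The main obstacle is this last localisation. \textit{MTL} operators measure distances from the current evaluation point, whereas the quantifiers of $\psi$ range over the fixed window $[x,x+1)$ whose endpoints sit at distances that \emph{vary} as one moves to interior points; bounding an \textit{Until} by a single fixed rational therefore either over-reaches past $x+1$ or under-reaches within it, and there is no proposition marking the endpoints on which to fall back. The resolution I would pursue mirrors the three-case split of Example~\ref{ex:fomtl}: treat the near-left portion of a window with future operators anchored at its left endpoint and the near-right portion with past operators anchored at its right endpoint (reached punctually), meeting in the middle at a rational split point, and recurse. Making this terminate forces the induction to run on quantifier rank rather than on the geometry, so that the finitely many window-types at each rank can be expressed uniformly and recombined across the split. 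Verifying correctness of the composition at interior split points, and controlling the proliferation of fractional constants generated by repeated subdivision, is where the real work of the proof lies.
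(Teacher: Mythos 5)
Your overall architecture (reduce to unit formulas, then translate a single unit window into \textit{MTL} by splitting at rational points, with future operators anchored at the left endpoint and past operators anchored punctually at the right endpoint) matches the paper's, and you have correctly identified the central obstacle. But there are two genuine gaps. First, your reduction to unit formulas fails as stated: after confining each quantifier to one of the $2N$ windows $I_k$, it is \emph{not} true that every order comparison between variables in distinct windows has a fixed truth value, because order atoms in {\ONEMLO} may carry $+1$ shifts. For $y \in [x,x+1)$ and $z \in [x+1,x+2)$ the atom $y+1<z$ is undetermined, so a formula such as $\exists y\,\exists z\,(P(y) \wedge Q(z) \wedge y+1<z)$ with $y,z$ so confined does not factor into a Boolean combination of one formula about $I_0$ and one about $I_1$; the windows are genuinely correlated, and your plan of translating each $\mu_k$ independently and prefixing $\pDia_{=N-k}$ or $\fDia_{=k-N}$ loses these correlations. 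The paper resolves this by folding all $2N$ windows into a \emph{single} unit window using fresh shifted predicates $P^j$ with $P^j(u) \leftrightarrow P(u+j)$ (eliminated at the very end via $\fDia_{=j}$ and $\pDia_{=-j}$); the cross-window order atoms then become ordinary order atoms inside one window and are disposed of by Proposition~\ref{prop:unit}. Some such device is indispensable, not bookkeeping.

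Second, and more importantly, the part you defer as ``where the real work of the proof lies'' is the theorem. The missing idea that makes the rational-split recursion terminate and compose correctly is the Gabbay--Pnueli--Shelah--Stavi normal form (Lemma~\ref{lem:gpss}): every $\mathit{FO}(<)$ formula in $\mathrm{Bet}(x,y)$ is a Boolean combination of \emph{decomposition formulas}, each asserting the existence of an ordered sequence $x=z_0<\cdots<z_n=y$ with \textit{LTL}-definable properties holding at the $z_i$ and on the open gaps between them. One should not attempt to localise arbitrary nested \textit{LTL} connectives over the suborder $[x,x+1)$ (your Kamp route), precisely because the distance to the right endpoint varies with the evaluation point. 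On decomposition formulas the only anchoring needed is at the two endpoints, and the induction runs on the number $n$ of existential quantifiers: either all interior $z_i$ lie in a single subinterval of width $\frac{1}{2n}$ in the left half (handled by nested $\mathbf{U}$'s with granularity $\frac{1}{2n}$, plus an overlapping $\fBox$ to reach $x+1$), or in the right half (a $\fDia_{=1}$ followed by nested $\mathbf{S}$'s), or some boundary $x+\frac{k}{2n}$ separates $z_l$ from $z_{l+1}$, in which case the sequence splits into two shorter decomposition formulas that are rescaled to unit length and handled by the induction hypothesis. This flattening is what pins down your ``composition at interior split points'' and what controls the proliferation of fractional constants; without it, the termination and correctness issues you yourself flag remain unresolved.
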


In~\cite{ORW09} it was shown that \textit{MTL} is expressively complete for
{\ONEMLO} on bounded domains of the form $[0,N)$.
  Theorem~\ref{thm:bounded} is subtly different from that result,
  which used the definability of the point $0$ in a crucial way.  In
  particular, unlike~\cite{ORW09}, in the present setting we require
  \textit{MTL} operators whose constraining intervals have fractional endpoints
  to achieve expressive completeness.

The proof of Theorem~\ref{thm:bounded} has the following structure:
\begin{enumerate}[Step 1. ]
\item By introducing extra predicates, we rewrite each $N$-bounded
  formula as a Boolean combination of unit formulas and atoms.
\item Using a normal form of Gabbay, Pnueli, Shelah, and
  Stavi~\cite{GPSS80} (see also Hodkinson~\cite{Hod}) we give a
  translation of unit formulas to \textit{MTL}.  This step
  reveals a connection between the granularity of \textit{MTL} and the
  quantifier depth of the unit formulas.
\item We complete the translation by removing the new predicate
  symbols introduced in Step 1.
\end{enumerate}

\subsection*{Step 1. Translation to unit formulas and atoms}
We translate an $N$-bounded formula $\varphi(x)$ into a formula
$\overline{\varphi}(x)$ that is a Boolean combination of unit formulas
and atoms.

Let $\varphi(x)$ mention monadic predicates $P_1,\ldots,P_m$.  For
each predicate $P_i$ we introduce an indexed family of new predicates
$P_i^j$, where $-N \leq j < N$.  Intuitively, $P_i^j(y)$ stands for
$P_i(y+j)$.  Formally, given a signal $f$ that interprets the
$P_i$ we define a signal $\overline{f}$ that interprets the
$P_i^j$ by
\[ P_i^j \in \overline{f}(r) \Longleftrightarrow P_i \in f(r+j)\]
for all $r \in \domain$.

Next we define a formula $\overline{\varphi}$ such that $f,r
\models \varphi$ if and only if $\overline{f},r \models
\overline{\varphi}$.  To obtain $\overline{\varphi}$ we recursively
replace every instance of a subformula
\[\exists y\,((x-N \leq
y < x+N) \wedge \psi)\] in $\varphi$ by the formula
\[ \exists y\, \big((x \leq y < x+1) \wedge
( \psi[(y-N)/y] \vee \ldots \vee \psi[(y+(N-1))/y]) \big).
\]

Having carried out these substitutions, we use simple arithmetic to
rewrite every term in $\varphi$ as $z+k$, where $z$ is a variable and
$k \in \mathbb{Z}$ is an integer constant.
Every use of monadic predicates in $\varphi$ now has the form $P_i(z +
k)$, for $-N \leq k < N$.  Replace every such predicate by $P^k_i(z)$.

After the above operations the resulting formula is a Boolean
combination of unit formulas and atomic formulas.

\subsection*{Step 2. Translating unit formulas to MTL}
In the next stage of the proof we show how to translate unit formulas
into equivalent \textit{MTL} formulas.  Critical to this step is the following
definition and lemma from~\cite{GPSS80}.  Lemma~\ref{lem:gpss} is the
main technical lemma in the expressive completeness proof of \textit{LTL} for
FO($<$) in~\cite{GPSS80}.

A \emph{decomposition formula} $\delta(x,y)$ is any formula of the
form
\begin{align*}
x<y & \wedge \exists z_0 \ldots \exists z_n\,(x=z_0< \cdots < z_n=y) \\
    & \wedge \bigwedge\{ \varphi_i(z_i) : 0 \leq i < n\}\\
    & \wedge \bigwedge\{ \forall u\,((z_{i-1}<u<z_{i}) \rightarrow \psi_i(u)) :
0 < i \leq n\}
\end{align*}
where $\varphi_i$ and $\psi_i$ are \textit{LTL} formulas regarded as unary
predicates.

\begin{lemma}[\cite{GPSS80}]
Over any domain with a complete linear order, every {\MLO} formula $\psi(x,y)$ in
$\mathrm{Bet}(x,y)$ is equivalent to a Boolean combination of
decomposition formulas $\delta(x,y)$.
\label{lem:gpss}
\end{lemma}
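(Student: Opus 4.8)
The plan is to prove the statement by induction on the quantifier depth $k$ of $\psi$, having first reduced it to a statement about types. Fix the finite set of monadic predicates occurring in $\psi$, and for $x<y$ write $\mathrm{tp}_k([x,y))$ for the $\equiv_k$-type of the labelled linear order $[x,y)$, i.e.\ the collection of {\MLO} sentences of quantifier depth at most $k$ true in $[x,y)$ (regarding $x$ as the included minimum and $y$ as the excluded supremum). There are finitely many such types, and because clauses (i) and (ii) confine $\psi$ to the predicates on $[x,y)$, on the region $x<y$ the formula $\psi(x,y)$ is equivalent to a finite disjunction of the conditions $\mathrm{tp}_k([x,y))=\tau$; the degenerate regions $x=y$ and $y<x$ contribute only Boolean combinations of the decomposition formula $x<y$ (the instance $n=1$, $\varphi_0=\psi_1=\true$) and its negation. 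It therefore suffices to prove the strengthened claim that each type predicate $\mathrm{tp}_k([x,y))=\tau$ is equivalent to a \emph{disjunction} of decomposition formulas; the positivity here is what will let us commute an existential quantifier past the induction hypothesis. The base case $k=0$ is immediate, since a depth-$0$ sentence over a structure with no named elements is $\true$ or $\false$, so $\mathrm{tp}_0([x,y))$ records only whether $[x,y)$ is empty.

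For the inductive step I would combine the Hintikka analysis of quantifier depth with a composition theorem for ordered sums. First, $\mathrm{tp}_{k+1}([x,y))$ is determined by which $k$-types $\mathrm{tp}_k([x,y),z)$ of one-point-marked intervals are realised as $z$ ranges over $[x,y)$. Second, by the Feferman--Vaught-style composition theorem for labelled linear orders, the $k$-type of the marked interval $([x,y),z)$ is a function of the triple $(\mathrm{tp}_k([x,z)),\, L(z),\, \mathrm{tp}_k((z,y)))$, where $L(z)$ records the predicates holding at $z$. The induction hypothesis expresses each of $\mathrm{tp}_k([x,z))=\sigma$ and $\mathrm{tp}_k((z,y))=\rho$ as a disjunction of decomposition formulas $\delta(x,z)$ and $\delta(z,y)$. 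Because these disjunctions are positive, the existential $\exists z\,(x\le z<y\wedge\cdots)$ distributes over them and simply splices the two partition chains together through the new point $z$ — which becomes a partition point carrying $L(z)$ as its point-condition — yielding, for each realisable marked type $t$, a disjunction of decomposition formulas $\delta(x,y)$ asserting that $t$ is realised. Pinning down $\mathrm{tp}_{k+1}([x,y))=\tau$ then amounts to asserting that exactly the prescribed set of marked types is realised; arranging the $(k{+}1)$-level decomposition formulas into a mutually exclusive, type-indexed family (the invariant carried by the induction) re-expresses this as a disjunction of decomposition formulas, maintaining the strengthened claim.

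The step that genuinely invokes Dedekind-completeness, and which I expect to be the main obstacle, is the requirement that the interior of each sub-interval be described by a \emph{single} unary predicate $\psi_i$ holding at all its points. The natural device is the initial-segment-type function $\lambda(u):=\mathrm{tp}_k([x,u))$, which takes finitely many values; completeness is used to locate the break-points at which $\lambda$ changes — essentially as suprema of the sets $\{\,u:\lambda(u)=\sigma\,\}$ — and to guarantee that there are only finitely many of them. Placing the partition points $z_0<\cdots<z_n$ at these break-points makes $\lambda$ constant on each open interval $(z_{i-1},z_i)$, and this homogeneity is exactly what lets one unary formula $\psi_i(u)$ — recording the common local type at $u$ — determine the $k$-type of the entire open sub-interval. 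Verifying that these break-points are first-order definable and that pointwise homogeneity really does pin down the interval's type is the technical heart of the argument; over orders lacking completeness, such as $\mathbb{Q}$, the suprema need not exist, the decomposition can fail, and indeed the conclusion itself is known to break down.
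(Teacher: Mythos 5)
The paper offers no proof of this lemma to compare against: it is imported verbatim from \cite{GPSS80} (see also \cite{Hod}). Judged against the known argument, your skeleton is the right one --- reduce $\psi$ to depth-$k$ types of $[x,y)$, induct on quantifier depth, use the Hintikka/Feferman--Vaught composition step to splice a decomposition of $[x,z)$, the labelled point $z$, and a decomposition of $(z,y)$ into one of $[x,y)$, and invoke Dedekind completeness to produce a finite partition into points and homogeneous open intervals.

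However, the step you yourself flag as the technical heart is not merely unverified; as sketched it is false. You propose to take as partition points the places where $\lambda(u):=\mathrm{tp}_k([x,u))$ changes value, inferring finiteness from the finiteness of the set of types. But the level sets $\{u:\lambda(u)=\sigma\}$ need not be convex, and $\lambda$ can change value infinitely often on a bounded interval of $\mathbb{R}$: let $P$ hold exactly at the points $1-\frac{1}{n}$ ($n\geq 2$) and $Q$ exactly at those with $n$ even; the depth-$2$ sentence $\exists z\,(P(z)\wedge Q(z)\wedge\forall w\,(P(w)\rightarrow\neg(z<w)))$ (``the greatest $P$-point is a $Q$-point'') holds in $[0,u)$ precisely when $u\in(1-\frac{1}{n},1-\frac{1}{n+1}]$ for even $n$, so $\mathrm{tp}_2([0,u))$ oscillates infinitely often as $u$ increases to $1$. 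Hence no finite partition makes $\lambda$ constant on its open pieces, and the suprema of the level sets do not isolate finitely many break-points. The actual argument works with the \emph{additive} colouring $c(u,v)=\mathrm{tp}_k((u,v))$ valued in the finite semigroup of types: one defines $u\sim v$ to mean that the closed interval between them admits a finite partition into points and open intervals homogeneous for $c$, shows the $\sim$-classes are convex and (using completeness plus a Ramsey-type analysis of idempotents in the semigroup) closed and open in the order topology, and concludes that $[x,y]$ is a single class; a separate lemma then shows that homogeneity of an open piece, witnessed by a single unary formula, determines its type. Note also that your ``exactly the prescribed set of marked types is realised'' step needs the negative (universal) half of that assertion, and your reduction of it to a disjunction relies on the mutually exclusive, exhaustive family of decomposition formulas --- which is precisely the existence claim above. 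So the proposal is a faithful outline of the standard proof whose load-bearing step is missing and whose proposed substitute for it does not work.
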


Recall from Proposition~\ref{prop:unit} that for any unit formula
$\theta(x)$ there exists an $\mathit{FO}(<)$ formula $\psi \in
\mathrm{Bet}(x,y)$ such that $\psi[(x+1)/y]$ is equivalent to
$\theta(x)$.  Thus, in light of Lemma~\ref{lem:gpss}, to translate
unit formulas to \textit{MTL} it suffices to consider unit formulas of the form
$\delta[(x+1)/y]$ where $\delta(x,y)$ is a a decomposition formula.

\begin{proposition}
Let $\delta(x,y)$ be a decomposition formula and consider the unit
formula $\theta(x) = \delta[(x+1)/y]$.  Then there is an \textit{MTL} formula
equivalent to $\theta(x)$.
\end{proposition}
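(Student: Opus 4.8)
The plan is to read the unit formula $\theta(x)=\delta[(x+1)/y]$ as an assertion about a chain of points filling the interval $[x,x+1]$, and then to realise that chain in MTL by induction on the number of breakpoints $n$. Unwinding the definition, $\theta(x)$ holds iff there are points $x=z_0<z_1<\cdots<z_n=x+1$ with each $\varphi_i$ true at $z_i$ (for $0\le i<n$) and each $\psi_i$ true throughout the open segment $(z_{i-1},z_i)$ (for $0<i\le n$). The first thing I would record is that the $\varphi_i$ and $\psi_i$ are \MLO{} formulas regarded as predicates, hence LTL formulas by Kamp's theorem and so already MTL formulas; they may therefore be treated as available atoms, and the whole task reduces to expressing the existence of the chain. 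Since the breakpoints increase from $z_0=x$ to $z_n=x+1$, every interior breakpoint automatically lies in the open unit interval $(x,x+1)$. The base case $n=1$ is then immediate: $\theta(x)$ says only that $\varphi_0$ holds at $x$ and $\psi_1$ holds throughout $(x,x+1)$, which is captured by the MTL formula $\varphi_0 \wedge \fBox_{(0,1)}\psi_1$, using no fractional constants.

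For the inductive step I would split the unit interval at its midpoint $x+\tfrac12$, pinned in MTL by the modality $\fDia_{=1/2}$. Exactly one of the following occurs: either some breakpoint coincides with $x+\tfrac12$, or the midpoint falls strictly inside some open segment $(z_{i-1},z_i)$, so that $\psi_i$ holds there. In each case the chain decomposes into a \emph{left} decomposition over $(x,x+\tfrac12)$ and a \emph{right} decomposition over $(x+\tfrac12,x+1)$, and I would write $\theta$ as a finite disjunction, over the (unknown) crossing index $i$ and the two cases, of conjunctions of a left part with a right part guarded by $\fDia_{=1/2}$. The essential gain is that the left part involves only the breakpoints $z_0,\ldots,z_{i-1}$ and the right part only $z_i,\ldots,z_n$, so each carries \emph{strictly fewer} than $n$ breakpoints. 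Each half is a decomposition over an interval of length $\tfrac12$; rescaling by the factor $2$ (using the closure of MTL and \QMLO{} under scaling established in Section~\ref{sec:prelim}) turns it into a genuine unit formula with fewer breakpoints, to which the induction hypothesis applies, and rescaling the resulting MTL formula back by $\tfrac12$ yields the required subformula on the half-interval.

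The step I expect to be the main obstacle — and the point at which fractional constants become indispensable — is making the midpoint split rigorous, in particular the mid-segment case in which the crossing segment's formula $\psi_i$ must be shared consistently between the two halves (asserted as a suffix of the left chain and simultaneously as a prefix of the right chain meeting at $x+\tfrac12$), together with confining every interior breakpoint to its half without appealing to variable distances. A naive nesting of unit-bounded Until operators fails here: once a first step places $z_1$ at a variable distance inside $(x,x+1)$, a further $\until{\cdot}{\cdot}{<1}$ reaches into $(z_1,z_1+1)$ and may overshoot the pinned endpoint. Anchoring instead at the fixed rational points produced by repeated halving avoids this, since a reach bounded by $<\tfrac{1}{2^k}$ from such a grid point cannot escape its cell. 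Because each level of the recursion halves the interval, the denominators grow with the recursion depth, which in turn is governed by the number of breakpoints, i.e.\ by the quantifier depth of the decomposition formula; this is precisely the promised link between the granularity of MTL and the quantifier depth of the unit formulas, and it is why the translation must leave the integer-constant fragment even though the original formula lies in \ONEMLO{}. Assembling these pieces completes the proof and, via Lemma~\ref{lem:gpss} and Proposition~\ref{prop:unit}, the inductive step needed for Theorem~\ref{thm:bounded}.
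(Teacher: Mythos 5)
Your base case and the general shape of your argument (induction on the number of segments, splitting the unit interval at a fixed rational point, rescaling each half to a unit formula, applying the induction hypothesis, and recombining with a punctual $\fDia$ modality) correspond to one of the cases in the paper's proof. But there is a genuine gap in your inductive step: the claim that splitting at the midpoint yields two halves each with \emph{strictly fewer} than $n$ breakpoints is false. If all the interior points $z_1,\ldots,z_{n-1}$ happen to lie in $(x,x+\tfrac12)$, the left piece is a decomposition over $[x,x+\tfrac12)$ with the \emph{same} number of segments $n$ as the original (the right piece degenerates to $\fBox_{(0,\frac12)}\psi_n$ shifted to the second half), so the induction measure does not decrease. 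Repeated halving does not rescue this: the $z_i$ are at arbitrary real positions and may cluster in an arbitrarily small neighbourhood of $x$ (or of any other point), so no finite depth of halving is guaranteed to separate two consecutive breakpoints, and your recursion need not terminate.

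The paper resolves exactly this difficulty with a three-way case analysis over a grid of $2n$ cells of width $\frac{1}{2n}$. If the interior points are \emph{not} all contained in a single cell, then some grid point $x+\frac{k}{2n}$ properly separates $z_l$ from $z_{l+1}$ with $1\le l<n-1$, and both halves genuinely have strictly fewer segments; this is your split, made to work by choosing the cut adaptively, as a disjunction over $k$ and $l$, rather than always at the midpoint. If the interior points \emph{are} all in one cell, no split can reduce the count, and the paper instead expresses the formula directly, with no appeal to the induction hypothesis: a chain of nested $\until{}{}{(0,\frac{1}{2n})}$ operators (anchored forward from $x$ when the cell lies in the first half of the interval, and backward from $x+1$ via $\fDia_{=1}$ and nested Since operators when it lies in the second half), where the cell width $\frac{1}{2n}$ is precisely what prevents the overshooting you correctly identify as the danger, together with an overlapping $\fBox$ conjunct covering the rest of $[x,x+1)$. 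Without this clustered case your induction is not well-founded, so the proposal as written does not constitute a proof.
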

\begin{proof}
We proceed by induction
on the number $n$ of existential quantifiers in $\delta(x,y)$.

\subsubsection*{Base case}
Let $\delta(x,y) = \varphi(x) \wedge \forall u\, (x<u<y \rightarrow
\psi(u))$, where $\varphi$ and $\psi$ are \textit{LTL} formulas.  Clearly the \textit{MTL} formula
$\varphi \wedge \fBox_{(0,1)} \psi$ is equivalent to $\delta[(x+1)/y]$.

\subsubsection*{Inductive case}
Let $\delta(x,y)$ have the form
 \begin{align*}
x<y & \wedge \exists z_0 \ldots \exists z_n\,(x=z_0< \cdots < z_n=y) \\
    & \wedge \bigwedge\{ \varphi_i(z_i) : 0 \leq i < n\}\\
    & \wedge \bigwedge\{ \forall u\,((z_{i-1}<u<z_i) \rightarrow \psi_i(u)) :
0 < i \leq n\} \, .
\end{align*}

Consider the unit formula $\theta(x):=\delta[(x+1)/y]$.  The idea is
to define \textit{MTL} formulas $\alpha_k,\beta_k$, $0\leq k<2n$, whose
disjunction is equivalent to $\theta$.  The definition of these
formulas is based on a case analysis of the values of the
existentially quantified variables $z_1,\ldots,z_{n-1}$ in $\delta$,
similar to the idea of Example~\ref{ex:fomtl}.
To this end, consider the following $2n$ half-open subintervals of
$[x,x+1)$: $[x,x+\frac{1}{2n}),[x+\frac{1}{2n},x+\frac{2}{2n}),\ldots,
[x+\frac{2n-1}{2n},x+1)$.  We identify three mutually exclusive cases
according to the distribution of the $z_i$ among these intervals:
\begin{enumerate}
\item $\{z_1,\ldots,z_{n-1}\} \subseteq [x+\frac{k}{2n},x+\frac{k+1}{2n})$ for
  some $k<n$;
\item $\{z_1,\ldots,z_{n-1}\} \subseteq [x+\frac{k}{2n},x+\frac{k+1}{2n})$ for some $k$, $n \leq k < 2n$;
\item There exists $k$, $1\leq k<2n$, and $l$, $1\leq l < n-1$, such
  that $z_l < x+\frac{k}{2n} \leq z_{l+1}$ (i.e., $z_1,\ldots,z_{n-1}$
  are not all contained in a single interval).
\end{enumerate}

\paragraph{Case 1.} 
Assume that $k<n$ and consider the following \textit{MTL} formula:
\[
\renewcommand{\arraystretch}{1.5}
\begin{array}{llll}
\alpha_k \, :=&\varphi_0 \wedge \psi_1 \mathrel{\mathbf{U}_{[\frac{k}{2n},\frac{k+1}{2n})}}  
                 \\&\quad( \varphi_1 \wedge 
               ( \psi_2\mathrel{\mathbf{U}_{(0,\frac{1}{2n})}}\\
              &\quad\quad( \varphi_2 \wedge 
               ( \psi_3 \mathrel{\mathbf{U}_{(0,\frac{1}{2n})}}\\
              & \quad\qquad\qquad \ddots\\
              &\quad\qquad( \varphi_{n-2} \wedge
               ( \psi_{n-1} \mathrel{\mathbf{U}_{(0,\frac{1}{2n})}}\\
              &\quad\qquad\quad( \varphi_{n-1} \wedge\fBox_{(0,\frac{1}{2n})} \psi_n)) \cdots )
              \\&\wedge \quad\fBox_{(\frac{k+1}{2n},1)} \psi_n \, .\notag
\end{array}\]
By construction, if $\alpha_k$ holds at a point $x$ then the formulas
$\varphi_0,\psi_1,\varphi_1,\ldots,\varphi_{n-1},\psi_n$ hold in sequence
along the interval $[x,x+1)$.  In particular, $\psi_n$ holds on the
interval starting at the time that the subformula
$\fBox_{(0,\frac{1}{2n})} \psi_n$ begins to hold and extending to time
$x+1$ ( thanks to the ``overlapping'' subformula
$\fBox_{(\frac{k+1}{2n},1)} \psi_n$).  Thus $\alpha_k$ implies
$\theta$.  Conversely, if $\theta$ holds with the existentially
quantified variables $z_1,\ldots,z_{n-1}$ all lying in the interval
$(x+\frac{k}{2n},x+\frac{k+1}{2n})$, then clearly $\alpha_k$ also
holds.
\end{proof}

\paragraph{Case 2.} 

Suppose that $n\leq k<2n$ and consider the following \textit{MTL} formula:
\[
\renewcommand{\arraystretch}{1.5}
\begin{array}{lll}
\alpha_k \, := & \fDia_{=1}  \big[ \psi_n \mathrel{\mathbf{S}_{(\frac{2n-k-1}{2n},\frac{2n-k}{2n})}}\\
&\qquad ( \varphi_{n-1} \wedge  ( \psi_{n-1} \mathrel{\mathbf{S}_{(0,\frac{1}{2n})}}\\
&\qquad\quad( \varphi_{n-2} \wedge ( \psi_{n-2} \mathrel{\mathbf{S}_{(0,\frac{1}{2n})}}\\
&\qquad\qquad\qquad\quad         \ddots \\
&\qquad\qquad\quad( \varphi_{2} \wedge  ( \psi_{2} \mathrel{\mathbf{S}_{(0,\frac{1}{2n})}} \\
&\qquad\qquad\qquad( \varphi_{1} \wedge \pBox_{(0,\frac{1}{2n})} \psi_1)) \cdots )\big ]\\ 
&\wedge \quad \fBox_{(0,\frac{k}{2n})} \psi_1\quad\wedge\quad\varphi_0\, .
\end{array}\]

The definition of $\alpha_k$  is according to similar principles as in
Case 1.  If it holds at a point $x$ then the sequence of past
operators ensures that the formulas
$\psi_n,\varphi_{n-1},\psi_{n-1},\ldots,\varphi_1,\psi_1,\varphi_0$ hold in
sequence, backward from $x+1$ to $x$.  Thus $\alpha_k$ implies
$\theta$.  Conversely, if $\theta$ holds with the existentially
quantified variables $z_1,\ldots,z_{n-1}$ all lying in the interval
$[x+\frac{k}{2n},x+\frac{k+1}{2n})$, $n\leq k<2n$, then clearly
$\alpha_k$ also holds.

\paragraph{Case 3.}  
Suppose that $z_l < x+\frac{k}{2n} \leq z_{l+1}$ for some $k$, $1 \leq
k<2n$, and $l$, $1 \leq l < n-1$.  

The idea is, for each choice of $l$, to decompose $\theta$ into a property $\sigma_l$ holding
on the interval $[x,x+\frac{k}{2n})$ and a property $\tau_l$ holding
on the interval $[x+\frac{k}{2n},x+1)$.  We then apply the induction
hypothesis to transform $\sigma_l$ and $\tau_l$ to equivalent \textit{MTL}
formulas.  To this end, define
\begin{align*}
\sigma_l(x) \, := \, & \exists z_0 \ldots \exists z_{l+1}
                  (x=z_0 < \cdots < z_{l+1} = x+\textstyle\frac{k}{2n})\\
     & \wedge \bigwedge \{ \varphi_i(z_i) : 0 \leq i \leq l \}\\
     & \wedge \bigwedge \{ \forall u((z_{i-1}<u<z_i) \rightarrow \psi_i(u)) : 
                           1 \leq i \leq l+1 \} 
\end{align*}
and
\begin{align*}
\tau_l(x) \, := \, & \exists z_l \ldots \exists z_n
                  (x=z_l < \cdots < z_n = x+\textstyle\frac{2n-k}{2n})\\
     & \wedge \bigwedge \{ \varphi_i(z_i) : l+1 \leq i < n \}\\
     & \wedge \bigwedge \{ \forall u((z_{i-1}<u<z_i) \rightarrow \psi_i(u)) : 
                           l < i \leq n \} \, .
\end{align*}

We can turn $\sigma_l$ into an equivalent \textit{MTL} formula $\sigma_l^\ast$
by the following sequence of transformations: scale by
$\frac{2n}{k}$ to obtain a unit formula, apply the induction
hypothesis to transform the unit formula to an equivalent \textit{MTL} formula,
finally scale the resulting \textit{MTL} formula by $\frac{k}{2n}$.  We
likewise transform $\tau_l$ into an equivalent \textit{MTL} formula
$\tau_l^\ast$. 

We now define
\[ \beta_k := \bigvee_{1 \leq l < n-1} \Big( \sigma_l^\ast \wedge \fDia_{=\frac{k}{2n}} \big((\psi_{l+1}   \wedge \tau_l^\ast) \vee (\varphi_{l+1} \wedge \tau_{l+1}^\ast)\big)\Big) \, .\]
From the definition of $\sigma_l$ it is clear that $\beta_k$ matches
$\theta$ on $[x,x+\frac{k}{2n})$.  For the remaining interval
  $[x+\frac{k}{2n},x+1)$ we distinguish between two cases: if
    $x+\frac{k}{2n} < z_{l+1}$, then $\fDia_{=\frac{k}{2n}}(\psi_{l+1}
    \wedge \tau_l^\ast)$ agrees with $\theta$; and if $x+\frac{k}{2n}
    = z_{l+1}$ then $\fDia_{=\frac{k}{2n}}(\varphi_{l+1} \wedge
    \tau_{l+1}^\ast)$ agrees with $\theta$.  Thus $\beta_k$ implies
    $\theta$.  Conversely if $\theta$ holds with the existentially
    variables $z_1,\ldots,z_{n-1}$ satisfying the conditions of Case 3
    then one of the disjuncts, and hence $\beta_k$, must hold.

\subsection*{Step 3. Completing the translation}

After Step 2 we have an \textit{MTL} formula equivalent to the formula
$\overline{\varphi}(x)$ obtained in Step 1.  It remains only to
eliminate the extra predicates introduced in Step 1.  To this end, for
each predicate $P$ and $j\geq 0$, replace $P^j$ by $\fDia_{=j}P$, and
for $j<0$ replace $P^j$ by $\pDia_{=-j}P$.  Finally we obtain an \textit{MTL}
formula $\varphi^\dag$ equivalent to the original $N$-bounded formula
$\varphi(x)$.

\begin{theoremNo}{\ref{thm:bounded}}
For every $N$-bounded {\ONEMLO} formula $\varphi(x)$ there exists
an equivalent MTL formula $\varphi^\dag$.
\end{theoremNo}

\section{Expressive completeness of MTL}\label{sec:mtl}
Our next step towards proving the expressive completeness of
\textit{MTL} is to show that it is able to express all of {\ONEMLO}.
\begin{lemma}
For every {\ONEMLO} formula $\varphi(x)$ there is an equivalent MTL formula
$\varphi^\dag$.
\label{lem:main}
\end{lemma}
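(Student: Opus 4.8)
The plan is to prove Lemma~\ref{lem:main} by mimicking Gabbay's derivation of Kamp's theorem from separation, with Theorem~\ref{thm:sepThm} (metric separation of \textit{MTL}) playing the role of Theorem~\ref{thm:gabbaySep} and Theorem~\ref{thm:bounded} used to discharge the genuinely bounded pieces. Since the statement already concerns an \ONEMLO{} formula $\varphi(x)$, I would proceed directly by induction on quantifier depth, eliminating existential quantifiers one at a time and, as usual for a separated temporal logic, translating inner subformulas before the quantifier above them. The decisive subproblem is thus to translate $\exists y\,\psi$ once the quantifiers inside $\psi$ have been handled, so that the search for a witness $y$ is governed by a matrix built from metric atoms relating $y$ to the other reference points $x_i+q$ and from \textit{MTL} formulas anchored at those points.

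After placing the matrix in disjunctive normal form, I would carry out the standard case analysis on the position of the witness $y$ relative to these landmarks. The tractable cases are those in which the metric atoms pin $y$ to a single landmark $x_i+q$ — substitute and re-anchor the surviving \textit{MTL} subformulas at $x_i$ using $\fDia_{=q}$ or $\pDia_{=q}$ — or confine $y$ to an interval $(x_i+p,\,x_i+r)$ delimited by a \emph{single} variable, where the existential search is expressed directly by a (possibly unbounded) operator $\fDia_{(p,r)}$ or its past dual. In every one of these cases the quantifier collapses into a single \textit{MTL} formula.

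The main obstacle is the \emph{in-between} case, where the matrix confines $y$ to an interval $(x_a+p,\,x_b+r)$ whose endpoints involve two \emph{distinct} free variables lying an arbitrary qualitative distance apart. Here I would separate the scanned \textit{MTL} property $B(y)$ via Theorem~\ref{thm:sepThm}, writing it as a Boolean combination of bounded formulas together with distant formulas $\fDia_{=N}C$ (with $\mathit{pr}(C)<N-1$) and $\pDia_{=N}D$ (with $\mathit{fr}(D)<N-1$). Separation is exactly what allows the witness search to be realised as a forward \emph{Until} scan from $x_a+p$ (or a backward \emph{Since} scan from $x_b+r$): the distant components reduce to conditions tested at the fixed offsets $\pm N$, while the near-present bounded components are conditions an \emph{Until}/\emph{Since} can carry across the interval without being contaminated by references crossing its endpoints. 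Crucially I would use the feature peculiar to Theorem~\ref{thm:sepThm}, and absent from Gabbay's result, that the distant past and distant future may overlap the near present; this is unavoidable because the metric offset $N$ can exceed the unknown gap between $x_a$ and $x_b$, and it is precisely what makes the boundary bookkeeping consistent.

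Finally, wherever these manipulations leave a genuinely bounded \ONEMLO{} condition on a near window — which they must, since $+1$ permits arbitrarily fine metric structure inside a bounded interval that no single \textit{MTL} operator can express — I would invoke Theorem~\ref{thm:bounded} to convert it into an equivalent \textit{MTL} formula, necessarily introducing fractional constants in the spirit of Example~\ref{ex:fomtl}. Iterating the quantifier-elimination step down to quantifier depth zero, where atoms $P(x)$ and order and metric constraints are immediately temporal, produces an \textit{MTL} formula equivalent to $\varphi(x)$, proving the lemma. The place I expect to spend the most effort is verifying this in-between case: checking that the separated components of $B$ recombine correctly into \emph{Until}, \emph{Since}, $\fDia_{=N}$ and $\pDia_{=N}$ operators when the two delimiting landmarks are an unbounded distance apart, and that the overlapping distant-past/distant-future contributions at the endpoints are accounted for exactly once.
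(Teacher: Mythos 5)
You have correctly identified the two ingredients the paper uses (Theorem~\ref{thm:sepThm} for separation, Theorem~\ref{thm:bounded} for the bounded residue), but the way you propose to combine them contains a genuine gap, concentrated exactly where you predict the effort will be: the ``in-between'' case. Your induction works over an intermediate representation in which several distinct anchor variables $x_a, x_b, \ldots$ survive, and you propose to express $\exists y\,\bigl((x_a+p<y<x_b+r)\wedge B(y)\bigr)$ as a forward \emph{Until} scan from $x_a+p$ (or a backward \emph{Since} scan from $x_b+r$). But an $\until{}{}{}$ operator terminates at a point satisfying a \emph{formula}, not at a designated free variable: since $x_b$ is an arbitrary point an unbounded and unconstrained distance away, there is in general no \textit{MTL} formula true exactly at $x_b+r$ that could delimit the scan, and separating $B$ does nothing to supply one. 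This is the classical obstruction in direct quantifier-elimination proofs of Kamp's theorem, and it is aggravated here because the metric subdivision trick of Section~\ref{sec:bounded} (cutting $[x,x+1)$ into $2n$ pieces of width $\frac{1}{2n}$) is only available when the distance between the two landmarks is known in advance. Relatedly, you never state the induction hypothesis for these multi-anchored formulas, so it is not clear what object the ``tractable'' cases are even producing.

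The paper avoids this situation entirely by never allowing a second anonymous landmark to appear. It takes the \emph{outermost} quantifier $\varphi=\exists y\,\psi(x,y)$, splits on the truth values of the $P_i(x)$, and then replaces every atom relating a variable to $x$ (namely $x=z$, $x<z$, $x>z$, $x+1=z$, $x=z+1$) by a fresh monadic predicate $P_=$, $P_<$, $P_>$, $P_+$, $P_-$. The result $\psi'_\gamma(y)$ has a single free variable, so the induction hypothesis applies directly, and $\exists y$ becomes $\pDia\psi^\dag_\gamma\vee\psi^\dag_\gamma\vee\fDia\psi^\dag_\gamma$. Separation is then used not to drive a scan between two landmarks but to eliminate the fresh predicates: in the $\fDia_{=N}$ and $\pDia_{=N}$ components their values are constant (all $\false$ except $P_<$, respectively $P_>$), and in the bounded component one translates back to an $N$-bounded first-order formula via Proposition~\ref{prop:easy}, unsubstitutes the predicates, and invokes Theorem~\ref{thm:bounded}. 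If you want to rescue your route, you need either this predicate-introduction device or an explicit normal form (in the spirit of the decomposition formulas of Lemma~\ref{lem:gpss}) that makes the far landmark formula-definable; as written, the \emph{Until}/\emph{Since} scan in the two-landmark case cannot be carried out.
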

\begin{proof}
The proof is by induction on the quantifier depth $n$ of $\varphi$.

\subsubsection*{Base case, $n=0$}  
All atoms are of the form $P_i(x)$, $x=x$, $x<x$, $x+1=x$.  We replace
these by $P_i$, $\true$, $\false$, $\false$ respectively and obtain an
\textit{MTL} formula which is clearly equivalent to $\varphi$.

\subsubsection*{Inductive case}  
Without loss of generality we may assume $\varphi = \exists
y. \psi(x,y)$, where $\psi(x,y)$ has quantifier depth $n-1$.  We would
like to remove $x$ from $\psi$.  To this end we take a disjunction
over all possible choices for $\gamma:\{P_1(x), \ldots
P_m(x)\}\to\{\true,\false\}$, and use $\gamma$ to determine the value
of $P_i(x)$ in each disjunct via the formula $\theta_\gamma :=
\bigwedge_{i=1}^m (P_i(x) \leftrightarrow \gamma(P_i))$.  Thus we can
equivalently write $\varphi$ in the form
\begin{gather}
\bigvee_\gamma \big(\theta_\gamma(x) \wedge 
 \exists y. \psi_\gamma(x,y)\big) \, ,
\label{eq:pull}
\end{gather}
where the propositions $P_i(x)$ do not appear in the $\psi_\gamma$.

Now in each $\psi_\gamma$, $x$ appears only in atoms of the form
$x=z$, $x<z$, $x>z$, $x+1=z$, $x=z+1$ for some variable $z$.  We next
introduce new monadic propositions $P_=$, $P_<$, $P_>$, $P_+$ and
$P_-$, and replace each of the atoms containing $x$ in $\psi_\gamma$
with the corresponding proposition.  That is, $x=z$ becomes $P_=(z)$,
$x<z$ becomes $P_<(z)$ and so on.  This yields a formula
$\psi'_\gamma(y)$ in which $x$ does not occur, such that
$\psi'_\gamma(y)$ has the same truth value as $\psi_\gamma(x,y)$ if
the interpretations of the new propositions are consistent with $x$.
Thus for each value of $x$, (\ref{eq:pull}) has the same truth value as 
\begin{gather} 
\bigvee_\gamma (\theta_\gamma(x) \wedge \exists y.\psi'_\gamma(y)) \, .
\label{eq:pull2}
\end{gather}
for suitable intepretations of the new propositions.

By the induction hypothesis, for each $\gamma$ there is an \textit{MTL}
formula $\theta_\gamma^\dag$ equivalent to $\theta_\gamma(x)$, and an
\textit{MTL} formula $\psi^\dag_\gamma$ equivalent to
$\psi'_\gamma(y)$.  Then our
original formula $\varphi$ has the same truth value at each point $x$ as
\[ \varphi' := \bigvee_\gamma \big(\theta_\gamma^\dag \wedge 
(\pDia \psi_\gamma^{\dagger} \vee \psi_\gamma^{\dagger} \vee \fDia 
\psi_\gamma^{\dagger})\big)\]
for suitable interpretations of $\{ P_=, P_<, P_>, P_+,P_-\}$.  

By Theorem~\ref{thm:sepThm}, $\varphi'$ is equivalent to a Boolean
combination of formulas
\begin{enumerate}[(I) ]
\item $\fDia_{=N}  \theta$ where $\mathit{pr}(\theta)<N-1$,
\item $\pDia_{=N} \theta$ where $\mathit{fr}(\theta)< N-1$, and
\item $\theta$ where all intervals occurring in the temporal operators are bounded.
\end{enumerate}
Now in formulas of type (I) above, we know the intended value of each
of the propositional variables $P_=, P_<, P_>, P_+,P_-$: they are all
$\false$ except $P_<$, which is $\true$.  So we can replace these
propositional atoms by $\true$ and $\false$ as appropriate and obtain
an equivalent \textit{MTL} formula which does not mention the new
variables.  Likewise we know the value of each of propositional
variables in formulas of type (II): all are $\false$ except $P_>$,
which is $\true$; so we can again obtain an equivalent \textit{MTL}
formula which does not mention the new variables.  It remains to deal
with each of the bounded formulas, $\theta$.  From
Proposition~\ref{prop:easy}, there exists a formula $\theta^\ast(x)$
in {\QMLO}, with predicates from $\{P_=, P_<, P_>, P_+,P_-\}$, which
is equivalent to $\theta$.  It is not difficult to see that as
$\theta$ is bounded, there is an $N$ such that $\theta^\ast$ is
$N$-bounded.  We now unsubstitute each of the introduced propositional
variables.  That is, replace in $\theta^\ast(x)$ all occurrences of
$P_=(z)$ with $z=x$, all occurrences of $P_<(z)$ with $x<z$ etc.  The
result is an equivalent formula $\theta^+ \in$ {\QMLO}, which is still
$N$-bounded as we have not removed any constraints on the variables of
$\theta^\ast$.  From Theorem~\ref{thm:bounded}, it follows that there
exists an \textit{MTL} formula $\delta$ that is equivalent to
$\theta^+$, i.e., equivalent to $\theta$.
\end{proof}

Finally, recall from Section~\ref{sec:exp-equiv} how a translation
from {\ONEMLO} to {\textit{MTL}} can be lifted to a translation {\QMLO} to
{\textit{MTL}} via a simple scaling argument.  Thus Lemma~\ref{lem:main}
entails our main result:

\begin{theoremNo}{\ref{thm:main}}
For every {\QMLO} formula $\varphi(x)$ there is an equivalent MTL formula
$\varphi^\dag$.
\end{theoremNo}

\section{Conclusion}

In general, the theory of real-time verification lacks the stability
and canonicity of the classical theory, and has tended to suffer from
a proliferation of competing and mismatching formalisms.  Thus it was
a pleasant surprise to discover that \textit{MTL} is expressively complete for
first-order logic, particularly in view of the extensive literature on
the former and the fact that the latter is a natural yardstick against
which to measure expressiveness.

We are currently investigating the full extent of this result,
including a version for \textit{MTL} with integer constants, equipped with
counting modalities.

%








\end{document}